\providecommand*{\pderiv}[3][]{
        \frac{\partial^{#1}{#2}}
                {\partial {#3}^{#1}}}
\def\uu#1{u_{#1}}
\def\exnl{\Theta}
\def\cS{{\cal S}}
\def\cO{{\cal O}}
\newcommand{\liesl}{{\mathfrak{sl}}}
\def\N{{\mathbb{N}}}
\def\C{{\mathbb{C}}}
\def\Z{{\mathbb{Z}}}
\def\R{{\mathbb{R}}}
\def\im{\operatorname{Im}}
\def\exnl{\Theta}
\def\ad{{\rm ad}}
\def\s{{\bf S}}
\newcommand\A{{\mathcal F}}
\newcommand{\lieh}{{\mathfrak{h}}}
\newtheorem{Rem}{Remark}
\newtheorem{Not}{Notation}
\newtheorem{Def}{Definition}
\newtheorem{The}{Theorem}
\newtheorem{Pro}{Proposition}
\newtheorem{Lem}{Lemma}
\begin{document}
\title{Representations of $\liesl(2,\C)$ in category $\cO$ and master symmetries}
\author{Jing Ping Wang\\
School of Mathematics, Statistics \& Actuarial Science\\
University of Kent,
Canterbury, UK}
\date{}
\maketitle
\abstract{
In this paper, we first give a short account on the indecomposable
\(\liesl(2,\C)\) modules in the
Bernstein-Gelfand-Gelfand (BGG) category $\cO$. We show these modules naturally
arise for homogeneous integrable nonlinear evolutionary systems.
We then develop an approach to construct master symmetries for such
integrable systems. This
method naturally enables us to compute the hierarchy of time-dependent
symmetries. We finally illustrate
the method using both classical and new examples. We compare our
approach to the known existing methods used to construct master symmetries. For
the new integrable equations such as a Benjamin-Ono type equation, a new
integrable Davey-Stewartson type equation and two different
versions of (2+1)-dimensional generalised Volterra Chains,
we generate their conserved densities
using their master symmetries.
}

\vspace{0.2cm}
\noindent {\bf{Mathematics Subject Classification (2010)}}.  37K30, 37K05,
37K10, 35Q51.

\vspace{0.2cm}
\noindent{\bf{Keywords}}. Homogeneous integrable nonlinear equations, the BGG
category $\cO$, Master symmetries, Conservation laws,
Symmetries.

\section{Introduction}\label{sec1}

One of important hidden properties of integrable nonlinear evolution equations is the
existence of infinitely many commuting symmetries, which has been used as a
criterion to tackle the classification problems of integrable systems.  It has
produced fruitful results
(see for example review papers \cite{asy,mr86i:58070,mr89e:58062,mr93b:58070,mr95j:35010, mnw08, sw2009} and the references therein). 
One way to generate these commuting symmetries is to use the well-known recursion operators
\cite{AKNS74,mr58:25341}, which map a symmetry to a new symmetry. The Nijenhuis
property of recursion operators enables us to construct an abelian Lie
algebra of symmetries \cite{Fuc79, Mag80}. This paper is devoted to the
construction of master symmetries, which is an
alternative method to produce this hierarchy of mutually commuting symmetries.

A master symmetry is an evolutionary vector field $\tau$ whose adjoint action $\ad_{\tau}$ maps a symmetry to a new symmetry.
The concept of master symmetries was first introduced in \cite{mr83f:58039}, where the authors constructed
a master symmetry for the Benjamin-Ono equation and showed the equation has no
recursion operator of
polynomial type. This concept was further developed in
\cite{mr85m:58187, mr94j:58081}. A recent review on it can be found in \cite{asy}. For example, the
Burgers equation
$$ u_t=u_{xx}+u u_x$$ possesses a master symmetry
$$\tau=x(u_{xx}+u u_x) +\frac{1}{2} u^2 .$$
Starting from $a_0=u_x$,  the recurrence procedure $a_{n+1}=\ad_{\tau} a_n$ generates the hierarchy of commuting symmetries for 
the Burgers equation. This procedure is also called $\tau$-scheme \cite{mr94j:58081}.

There are several methods to construct master symmetries.  For homogeneous equations, master symmetries arise from applying their recursion operators to 
their corresponding scaling symmetry (see Definition \ref{def3}). They can also
be viewed as non-isospectral flows obtained from the Lax representation
\cite{OS86, ma92}.
Without having these extra structures at hand, master symmetries can be
constructed based on the existence of time dependent higher order symmetries
\cite{mr86c:58158, mr90j:58061}. The Burgers equation possesses infinite many $t$-dependent symmetries denoted by $S_n, n=1,2, \cdots$ and $S_n$ is polynomial
with respect to $t$ of degree $n$. We list a few of them:
\begin{eqnarray*}
&&S_1= t u_{x}+1;\\
&&S_2= t^2 (\uu{xx}+u \uu{x})+ t(x \uu{x}+u)+ x;\\
&&S_3= t^3 \left(\uu{xxx}+\frac{3}{2}u \uu{xx}+\frac{3}{2} \uu{x}^2+\frac{3}{4} u^2 \uu{x}
\right)
 + \frac{3}{2}  t^2\left(x \uu{xx}+ x u \uu{x}+2 \uu{x}+\frac{1}{2}u^2\right)\\
&&\qquad + \frac{3}{4}  t (x^2 \uu{x} +2 x u+2)+\frac{3}{4} x^2;\\
&&S_4= t^4 \left(\uu{xxxx}+2 u u_{xxx}+5 u_x u_{xx}+\frac{3}{2}u^2 \uu{xx}+3 u \uu{x}^2+\frac{1}{2} u^3 \uu{x}
\right)\\
&&\qquad +t^3 \left(x(2 \uu{xxx}+3 u \uu{xx}+3  \uu{x}^2+\frac{3}{2} u^2 \uu{x})+5 u_{xx}+6 u u_x+\frac{1}{2}u^3 \right)\\
&&\qquad+ \frac{3}{2}  t^2\left(x^2 \uu{xx}+ x^2 u \uu{x}+x u^2+4 x u_x+2
u\right)
 + \frac{1}{2}  t (x^3 \uu{x} +3 x^2 u+6 x) +\frac{1}{2} x^3.
\end{eqnarray*}
The master symmetry $\tau$ can be obtained from the coefficient of $t^2$ in $S_3$.

In order to systematically compute $t$-dependent symmetries of nonlinear
evolution equation, Sanders and the author observed the existence of
$\liesl(2, \C)$
in \cite{mr1874291} and applied the idea to some well-known equations such as
the Burgers and Kadomtsev--Petviashvili 
equations. For the Burgers equation, we let $e=u_x$, $h=2(x u_x+u)$ and $f=-(x^2 u_x+2x u +2)$, which are the $t$-coefficients in the above list. Then we have
\begin{eqnarray}\label{sl2b}
 [e,\ f]=h, \quad [h,\ e]=2 e, \quad [h,\ f]=-2 f,
\end{eqnarray}
where the Lie bracket defined by (\ref{liebra}), that is, these three elements
form an $\liesl(2, \C)$. Notice that 
$$[u_t, f]=[u_{xx}+u u_x, f]=4 \tau +8 u_x$$
is a master symmetry since $u_x$ commutes with all $a_n$. Later in
\cite{ffokas02}, Finkel and Fokas adapted it (without explicitly mentioning
$\liesl(2, \C)$) for 
equations with nonlocal master symmetries
such as the Sawada-Kotera equation. In this paper, we give justification of the method explaining where element $f$ is from and give rigorous conditions when 
$[f, u_t]$ is a master symmetries. We then apply the method to various examples
including some new $(2+1)$-dimensional equations presented in \cite{HNov13,
FNR13}.

Notice that elements $e$, $h$ and $f$ are all related to linear terms of $t$ in $S_i$, $i=1, 2, 3$. 
Let $w=\frac{1}{3} x^3 \uu{x} +x^2 u+2 x$ the linear term in $S_4$. Then we have $[e, w]=f$. We show 
that the vector space spanned by all linear terms of $S_n$, $n=1, 2, \cdots$ is
an infinite dimensional module of $\liesl(2, \C)$ in the 
Bernstein-Gelfand-Gelfand (BGG) category $\cO$. Thus, we are able to present
symmetries, master symmetries and their generalisations as elements
of Lie modules. We shall
refer this structure as the $\cO$-scheme for the corresponding integrable
equation. All vectors in this structure can be used to generate its (time
dependent) symmetries.

The arrangement of the paper is as follows: In section \ref{sec2} we give a short account on representations of \(\liesl(2,\C)\) in 
the BGG category $\cO$.
We then give definitions of symmetries and master symmetries for an evolutionary equation
and recall how to generate a hierarchy of conserved densities using master
symmetries in section \ref{sec3}.
The main theoretical results are presented in section \ref{sec4}. We first
show that for a homogeneous evolutionary equation, no matter it is integrable
or not, there is a natural $\liesl(2, \C)$ and further an infinite dimensional
module in the BGG category $\cO$\footnote{In recent paper \cite{troost2012}, the
author showed all indecomposable
$\liesl(2, \C)$ modules in the Bernstein-Gelfand-Gelfand category $\cO$ arises
as quantised phase space of physical models.}. We then give an approach to
construct master symmetries for such integrable equations and further obtain
its $\cO$-scheme. Finally, we show
that the elements in this infinite dimensional module enables us to generate
their time-dependent symmetries. In section \ref{sec5}, we use examples to
demonstrate how the proposed approach apply to various examples of both
$(1+1)$-dimensional and $(2+1)$-dimensional evolutionary equations. In the end,
we give a short discussion on further research in this
direction.

\section{The BBG category of \(\liesl(2,\C)\) modules}\label{sec2}
In this section we give a short account on the Bernstein-Gelfand-Gelfand (BGG) category $\cO$ of \(\liesl(2,\C)\) modules related to this paper.
We refer the reader to the book \cite{humph2008, mazor2010} for details.

The $\liesl(2,\C)$ algebra, or simply written as $\liesl(2)$,  is generated by three generators $e, f, h$ satisfying the relations
\begin{eqnarray}\label{sl2}
 [e,\ f]=h, \quad [h,\ e]=2 e, \quad [h,\ f]=-2 f.
\end{eqnarray}
Let $M$ be a $\liesl(2)$--module. For any  $\lambda\in \C$, we define a subspace $M_\lambda$ as
$$
M_\lambda=\left\{ v\in M \big{|} h \cdot v= \lambda v\right\}.
$$
If $\lambda$ is not an eigenvalue of the representation $h \cdot$, then $M_\lambda=\{ 0\}$. Whenever
$M_\lambda\neq \{ 0\}$, we call $\lambda$ a weight of $h$ in $M$ and $M_\lambda$ its associated weight space.

The objects of the BGG category $\cO$ of $\liesl(2)$--modules are $M$ such that
\begin{itemize}
 \item $M$ is finitely generated;
 \item $M$ is a weight module: $M$ is the direct sum of its weight spaces;
 \item All weight spaces of $M$ are finite dimensional.
\end{itemize}
It is clear that all finite dimensional modules lie in this category $\cO$. In fact, all modules in this category are the direct sum of the indecomposable
modules. The indecomposable modules inside the category $\cO$ of $\liesl(2)$ are classified \cite{humph2008}. Here we list out some basic results 
for these nonisomorphic indecomposable modules.
\subsection{The finite dimensional simple module $L(\lambda)$}
The finite dimensional simple module $L(\lambda)$ with $\lambda\in \N$ has dimension $\lambda+1$. It has $1$-dimensional weight spaces, with weights 
$\lambda, \lambda-2, . . . , -(\lambda-2)$ and $-\lambda$. One can choose basis vectors $v_i ( 0\leq i\leq \lambda)$ so that
\begin{eqnarray*}
&& f \cdot v_i=v_{i+1};\\
&& h \cdot v_i= (\lambda-2 i ) v_i;\\
&& e \cdot v_i= i (\lambda- i +1) v_{i-1};\\
&&e \cdot v_0=f \cdot v_{\lambda}=0.
\end{eqnarray*}
Here $v_0$ is a highest weight vector with highest weight $\lambda$. It is (up to scaling) the unique vector vanishing under the action of $e$.
\subsection{The Verma module $M(\lambda)$}
For any highest weight $\lambda\in \C$, the Verma module $M(\lambda)$ is defined as the module generated by the universal enveloping algebra acting on 
a highest weight vector. The weights of the Verma module $M(\lambda)$ are  $\lambda, \lambda-2, \lambda-4, \cdots$, each with multiplicity one.
One can choose basis vectors $v_i ( i\geq 0)$ so that
\begin{eqnarray*}
&& f \cdot v_i=v_{i+1};\\
&& h \cdot v_i= (\lambda-2 i ) v_i;\\
&& e \cdot v_i= i (\lambda- i +1) v_{i-1}.
\end{eqnarray*}
When $\lambda\in \N$, we have $e \cdot v_{\lambda+1}=0$. In this case the
maximal submodule of $M(\lambda)$ is $M(-\lambda-2)$, and  
the finite dimension simple module $L(\lambda)$ is the quotient space $M(\lambda)/M(-\lambda-2)$.

When $\lambda$ is not a positive integer, the Verma module $M(\lambda)$ is simple.
\subsection{The dual Verma module $M^{\vee}(\lambda)$}
Duality action within the category $\cO$ maps the Verma module $M(\lambda)$ to its dual $M^{\vee}(\lambda)$. 
This module has the property that one can reach the highest weight vector from any vector by action of $e$, which is dual to the property
of a Verma module $M(\lambda)$ that one can reach any vector from the highest weight vector by action of $f$.
For any highest weight $\lambda\in \C$, the weights of the dual Verma module are
the same as the Verma module.
One can choose basis vectors $v_i ( i\geq 0)$ so that
\begin{eqnarray*}
&& e \cdot v_i=v_{i-1};\\
&& h \cdot v_i= (\lambda-2 i ) v_i;\\
&& f \cdot v_i= (i+1) (\lambda- i) v_{i+1};\\
&& e \cdot v_0=0.
\end{eqnarray*}
When $\lambda\in \N$, we have $f \cdot v_{\lambda}=0$. In this case the maximal submodule of $M^{\vee}(\lambda)$ is $L(\lambda)$.
The quotient space $M^{\vee}(\lambda)/L(\lambda)$ is isomorphic to the Verma module $M(-\lambda-2)$).
\subsection{The projective module $P(-\lambda-2)$}
For $\lambda\in \N$, there are also non-trivial projective modules
$P(-\lambda-2)$.  It has weights $\lambda, \lambda-2$, $\cdots, -\lambda$ with
multiplicity one and 
the weights $-\lambda-2, -\lambda-4, \cdots$ with multiplicity two.   Thus we can choose the basis vectors $v_i ( i\geq 0)$ and 
$w_j (j\geq 0)$ such that
\begin{eqnarray*}
&& f \cdot v_i=v_{i+1};\\
&& h \cdot v_i= (\lambda-2 i ) v_i;\\
&& e \cdot v_i= i (\lambda- i +1) v_{i-1};\\
&& f \cdot w_i=w_{i+1};\\
&& h \cdot w_i= (-\lambda-2-2 i ) v_i;\\
&& e \cdot w_i= -i (\lambda +i+1) w_{i-1};\\
&& e \cdot w_0=v_{\lambda}.
\end{eqnarray*}
Notice that this indecomposable module $P(-\lambda-2)$ has a Verma submodule $M(\lambda)$,
as well as a Verma submodule $M(-\lambda-2)$. The quotient space $P(-\lambda-2)/M(\lambda)$ is isomorphic to $M(-\lambda-2)$ and 
the quotient space $P(-\lambda-2)/M(-\lambda-2)$ is isomorphic to $M^{\vee}(\lambda)$.

\section{Master symmetries of evolution equations}\label{sec3}
In this section we give the definitions of symmetries and master symmetries for scalar $1+1$-dimensional evolutionary partial differential equations
meanwhile we also fix some notations. These definitions can be easily extend to multicomponent and $2+1$-dimensional evolutionary equations.

Let $u$ be a scalar smooth function of the independent variables
$x$ and $t$. We consider an evolutionary differential equation of dependent variable $u$ of the form
\begin{equation}\label{eq}
u_t=K[u], 
\end{equation} 
where $[u]$ means that the smooth function $K$ depends on $u$ and $x$-derivatives of $u$ up to some finite order. 

All smooth functions depending on $x$, $t$, $u$ and $x$-derivatives of $u$ form a
differential ring $\A$ with total $x$-derivation
$$D_x=\pderiv{}{x}+\sum_{k=0}^\infty u_{(k+1)x}  \pderiv{}{u_{kx}},  $$ 
where $\ u_{kx}=\partial_x^k u$. We simply write as $u, u_x, u_{xx}, \cdots$ instead of $u_{0x}, u_{1x}, u_{2x}, \cdots$
when $k$ is small. The
highest order of $x$-derivative we refer to the order of a given function. For any
element $g\in \A$, we define an equivalence class (or a functional) $\int\! g$ by
saying that $g$ and $h$ are equivalent if and only if \(g-h\in \im D_x\). The space
of functionals, denoted by $\A'$, does not inherit the ring structure from $\A$.

A vector field (derivation) is said to be evolutionary if it commutes with
the operator $D_x$. In the scalar case, such vector filed is completely determined by a smooth 
function $P\in \A$. We call it the characteristic of the vector field 
$${\bf v}_P=\sum_{k=0}^\infty D_x^k(P)  \pderiv{}{u_{kx}}.$$
For any two evolutionary vector fields with characteristics $P$ and $Q$, we define a Lie bracket as follows
\begin{eqnarray}\label{liebra}
[P, \ Q]=Q_{\star} (P)-P_{\star}(Q), 
\end{eqnarray}
where $P_{\star}=\sum_j \frac{\partial P}{\partial u_{jx}} D_x^j$ is the Fr{\'e}chet derivative of $P$. The evolutionary
vector fields form a Lie algebra denoted by $\lieh$. We simply say $P\in \lieh$.
\begin{Def}\label{def1}
An evolutionary vector field with characteristic $P$ is a symmetry of system
(\ref{eq}) if and only if 
\begin{eqnarray}\label{symmetry}
 \pderiv{P}{t}+\ad_K P=\pderiv{P}{t}+[K,\ P]=0
\end{eqnarray}
This equation is said to be integrable if it possesses infinitely many higher
order symmetries. 
\end{Def}
Notice that if $P$ is not explicitly dependent on $t$, the symmetry condition will reduce to $\ad_K P=0$.
Since the right hand side of equation (\ref{eq}) does not explicitly depend on time $t$, we can formally find $t$-dependent symmetry $P$ as
(cf. \cite{mr86c:58158})
$$P=\exp (-t \ad_K) \phi$$
for any $\phi\in \A$ not explicitly depending on $t$. This expression
makes sense when it reduces to a finite sum.
In particular, if for some \(m\in\N\) one
has \(ad_K^m \phi=0\), then  
\begin{eqnarray*}
 P=\sum_{l=0}^{m-1} \frac{(-t)^l}{l!} \ad_K^l \phi
\end{eqnarray*}
is a symmetry of equation (\ref{eq}).
Moreover, the partial derivative of $P$ with respect to $t$ of order $r\in \N$,
that is, \(\frac{\partial^r P}{\partial t^r} \)
is also a symmetry of \(K\):
\[
\frac{\partial^{r+1} P}{\partial^{r+1} t}
+ad_K\frac{\partial^r P}{\partial t^r}=
\frac{\partial^r }{\partial t^r} (\pderiv{P}{t}+\ad_K P) =0.
 \]
In particular, this implies that if \(P\) is a  polynomial of degree \(p\)
in \(t\),
then \(\frac{\partial^p P}{\partial t^p}\) is a time independent
symmetry of \(K\).
Notice that it is essential for this argument to work that both \(K\) and \(\psi\)
have no explicit time-dependence.

This observation was made by Fuchssteiner \cite{mr86c:58158}, where he defined the concept of $K$-generators:
we say $\phi[x,u]$ is a $K$-generator of degree $m$ if $\ad_K^{m+1} \phi=0$ and $\ad_K^m \phi\neq 0$.
The well-known master symmetries of equation (\ref{eq}) are $K$-generator of degree $1$.

\begin{Def}
Let $a_0$ be a time independent symmetry of equation (\ref{eq}), that is, $[a_0, K]=0$,
We say a $t$-independent evolutionary vector field $\tau$ is a master symmetry for (\ref{eq}) if $a_n$ obtained by the recurrence
procedure 
$$a_{n+1}=[\tau, a_n], \qquad n=0,1,2, \cdots
$$
mutually commute, i.e.
$[a_i, a_j]=0$,
and there exists $k\in\mathbb{N}$ such that $a_k=K$.
\end{Def}
Obviously, it follows from the definition that the
existence of a master symmetry implies integrability.
It also implies that
\begin{enumerate}
 \item $\ad_K^2 \tau=0$, that is, $\tau$ is a $K$-generator of degree $1$;
 \item $t \ad_K \tau +\tau$ is a time dependent symmetry of equation (\ref{eq}).
\end{enumerate}
These two implications have been used in the construction of a master symmetry for a given equation. We refer to \cite{mr1465768} 
and its references for concrete examples.

The master symmetries can be used to construct the hierarchy of conserved densities for equation (\ref{eq}) if existing.
\begin{Def}\label{den}
We say $\rho\in \A'$ is a conserved density of equation (\ref{eq}) if $$\int \left( \frac{\partial \rho}{\partial t}+\rho_*(K)\right)=0.$$
\end{Def}
If $\rho$ does not explicitly depend on $t$, the definition becomes to
$$
0=\int \rho_*(K)=\int {\bf v}_K (\rho)=\int K \cdot \frac{\delta \rho}{\delta u},
$$
where $\frac{\delta \rho}{\delta u}=\sum_j (- D_x)^j \left(\frac{\partial \rho}{\partial u_{jx}}\right)$ is the variational derivative of $\rho$.
For any $Q\in\lieh$,  we have (it can be easily obtained by Leibnitz rule of
Lie derivatives \cite{mr94j:58081})
\begin{eqnarray}\label{taucon}
\int {\bf v}_Q ( \rho_*(K))=\int [Q, K] \frac{\delta \rho}{\delta u} + \int K
\frac{\delta \rho_*(Q)}{\delta u} .
\end{eqnarray}
\begin{Pro}\label{pro1}
Let $\rho$ be a conserved density and $Q$ be a symmetry for equation (\ref{eq}). If $\rho_*(Q)\not\equiv 0$, then it is also 
a conserved density of equation (\ref{eq}).
\end{Pro}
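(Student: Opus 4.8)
The plan is to verify that $\sigma:=\rho_*(Q)={\bf v}_Q(\rho)$ satisfies the defining condition of a conserved density (Definition~\ref{den}); the hypothesis $\rho_*(Q)\not\equiv 0$ only serves to guarantee that the resulting density is nontrivial. It is convenient to introduce the evolution operator $\mathcal{D}:=\partial_t+{\bf v}_K$ on $\A$, where $\partial_t=\frac{\partial}{\partial t}$ denotes the explicit time derivative, so that $\mathcal{D}f=\frac{\partial f}{\partial t}+f_*(K)$ for $f\in\A$. In this language $\rho$ being a conserved density means exactly $\int\mathcal{D}\rho=0$, i.e.\ $\mathcal{D}\rho\in\im D_x$, and the goal is to show $\int\mathcal{D}\sigma=0$.

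The heart of the argument is the claim that $\mathcal{D}$ commutes with ${\bf v}_Q$ as soon as $Q$ is a symmetry. I would establish this from two elementary facts about evolutionary fields. Differentiating $f_*(Q)=\sum_k D_x^k(Q)\frac{\partial f}{\partial u_{kx}}$ in $t$ and using $[\partial_t,D_x]=0$ gives $[\partial_t,{\bf v}_Q]={\bf v}_{\partial_t Q}$. A direct computation with ${\bf v}_P=\sum_k D_x^k(P)\frac{\partial}{\partial u_{kx}}$ shows that the second-order terms cancel in the commutator, so that $P\mapsto{\bf v}_P$ is a Lie algebra homomorphism for the bracket (\ref{liebra}), namely $[{\bf v}_K,{\bf v}_Q]={\bf v}_{[K,Q]}$. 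Adding the two identities and invoking the symmetry condition (\ref{symmetry}) for $Q$,
\[
[\mathcal{D},{\bf v}_Q]=[\partial_t,{\bf v}_Q]+[{\bf v}_K,{\bf v}_Q]={\bf v}_{\partial_t Q+[K,Q]}={\bf v}_0=0 .
\]
Consequently $\mathcal{D}\sigma=\mathcal{D}({\bf v}_Q\rho)={\bf v}_Q(\mathcal{D}\rho)$.

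To conclude, I would use that $\rho$ is conserved, so $\mathcal{D}\rho=D_x h$ for some $h\in\A$. Since ${\bf v}_Q$ is evolutionary it commutes with $D_x$, hence ${\bf v}_Q(\mathcal{D}\rho)=D_x({\bf v}_Q h)\in\im D_x$, and therefore $\int\mathcal{D}\sigma=\int{\bf v}_Q(\mathcal{D}\rho)=0$. This is precisely the statement that $\rho_*(Q)$ is a conserved density of (\ref{eq}).

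The main difficulty is not conceptual but lies in the careful bookkeeping: one must keep track of the explicit $t$-dependence so that the symmetry condition yields the exact cancellation in $[\mathcal{D},{\bf v}_Q]$, and one must check the homomorphism property $[{\bf v}_K,{\bf v}_Q]={\bf v}_{[K,Q]}$, which is the only point where the sign convention in (\ref{liebra}) is used. If one prefers to rely on the already-derived identity (\ref{taucon}) instead of introducing $\mathcal{D}$, an equivalent route is to expand $\partial_t(\rho_*(Q))$ by the product rule, substitute $\partial_t Q=-[K,Q]$ from (\ref{symmetry}), and apply (\ref{taucon}) with $K$ and $Q$ interchanged to $\int(\rho_*(Q))_*(K)$; the $[K,Q]$-terms then cancel and one is left with $\int Q\,\frac{\delta}{\delta u}\!\left(\frac{\partial\rho}{\partial t}+\rho_*(K)\right)$, which vanishes because $\frac{\delta}{\delta u}$ annihilates total $x$-derivatives.
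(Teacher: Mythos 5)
Your proof is correct and is essentially the paper's argument: the operator identity $[\partial_t+{\bf v}_K,{\bf v}_Q]={\bf v}_{\partial_t Q+[K,Q]}=0$ is just the pointwise (pre-integration) form of the Leibniz identity (\ref{taucon}) that the paper invokes, and your closing remark reproduces the paper's computation verbatim. The only difference is presentational --- you establish commutation of $\mathcal{D}$ with ${\bf v}_Q$ exactly on $\A$ rather than only modulo $\im D_x$, which is a slightly cleaner (and marginally stronger) way to package the same cancellation.
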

\begin{proof}
 According to Definition \ref{den}, we compute
 \begin{eqnarray*}
&&\quad\int \left( \frac{\partial \rho_*(Q)}{\partial t}+\left(\rho_*(Q)\right)_*(K)\right) 
=\int \left( \left(\frac{\partial \rho}{\partial t}\right)_* (Q)+\frac{\partial Q}{\partial t} \frac{\delta \rho}{\delta u}
+K \frac{\delta \rho_*(Q)}{\delta u}\right)\\
&&=\int {\bf v}_Q ( \frac{\partial \rho}{\partial t}+ \rho_*(K))+\int \left(\frac{\partial Q}{\partial t}+ [K, Q]\right) \frac{\delta \rho}{\delta u}=0 ,
 \end{eqnarray*}
where we used identity (\ref{taucon}) and Definition \ref{def1}. Thus we proved the statement.
\end{proof}
Taking $Q=\tau$ in (\ref{taucon}) and using Definition \ref{den}, we can formulate the following statement:
\begin{Pro}\label{pro2}
Let $\rho$ be a $t$-independent conserved density for equation (\ref{eq}). Assume that there is a $t$-independent evolutionary vector field $\tau$ such
 $\rho_*(\ad_{\tau} K)\equiv 0$. If $\rho_*(\tau)\not\equiv 0$, it is a conserved density of equation (\ref{eq}).
\end{Pro}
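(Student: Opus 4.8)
The plan is to verify the conserved-density condition of Definition \ref{den} directly for the candidate functional $\rho_*(\tau)$. The first observation is that since both $\rho$ and $\tau$ are $t$-independent, so is $\rho_*(\tau)=\sum_j \frac{\partial \rho}{\partial u_{jx}} D_x^j\tau$; hence the time-derivative term drops out and, by the reduced form of Definition \ref{den}, it suffices to establish
\[
\int (\rho_*(\tau))_*(K)=\int K\,\frac{\delta \rho_*(\tau)}{\delta u}=0 .
\]

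Following the hint, I would next specialise the Leibniz-type identity (\ref{taucon}) to $Q=\tau$, obtaining
\[
\int {\bf v}_\tau(\rho_*(K))=\int [\tau,\,K]\,\frac{\delta \rho}{\delta u}+\int K\,\frac{\delta \rho_*(\tau)}{\delta u},
\]
and then solve for the quantity I am after, namely
\[
\int K\,\frac{\delta \rho_*(\tau)}{\delta u}=\int {\bf v}_\tau(\rho_*(K))-\int [\tau,\,K]\,\frac{\delta \rho}{\delta u}.
\]
The whole proposition thus reduces to showing that the two terms on the right vanish.

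For the second term I would invoke the integration-by-parts relation $\int P\,\frac{\delta \rho}{\delta u}=\int \rho_*(P)$ recorded just after Definition \ref{den}, which converts it into $\int \rho_*([\tau,K])=\int \rho_*(\ad_\tau K)$; this is zero by the standing hypothesis $\rho_*(\ad_\tau K)\equiv 0$. For the first term I would use that $\rho$ is a $t$-independent conserved density, so $\int \rho_*(K)=0$, i.e. $\rho_*(K)\in\im D_x$. Writing $\rho_*(K)=D_x(h)$ and exploiting that the evolutionary field ${\bf v}_\tau$ commutes with $D_x$ gives ${\bf v}_\tau(\rho_*(K))={\bf v}_\tau(D_x h)=D_x({\bf v}_\tau h)\in\im D_x$, whence $\int {\bf v}_\tau(\rho_*(K))=0$. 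Combining the two vanishings yields $\int K\,\frac{\delta \rho_*(\tau)}{\delta u}=0$, so $\rho_*(\tau)$ is a conserved density, and the hypothesis $\rho_*(\tau)\not\equiv 0$ ensures it is a nontrivial one.

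The step I expect to be the main obstacle is the vanishing of $\int {\bf v}_\tau(\rho_*(K))$: it is not enough that $\rho_*(K)$ has zero functional value; one must upgrade the functional identity $\int\rho_*(K)=0$ to the membership $\rho_*(K)\in\im D_x$ and then use the commutation ${\bf v}_\tau D_x=D_x{\bf v}_\tau$ to keep the result inside $\im D_x$. This is precisely where the evolutionary nature of $\tau$ is indispensable, and it is the structural mechanism by which the two hypotheses (that $\rho$ is conserved, and that $\rho_*(\ad_\tau K)\equiv 0$) combine to force the conclusion.
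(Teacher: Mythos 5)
Your proposal is correct and follows essentially the same route as the paper, which derives Proposition \ref{pro2} simply by setting $Q=\tau$ in identity (\ref{taucon}) and invoking Definition \ref{den}; you have merely written out explicitly the two vanishings (of $\int {\bf v}_\tau(\rho_*(K))$ via $\rho_*(K)\in\im D_x$ and commutation of ${\bf v}_\tau$ with $D_x$, and of $\int[\tau,K]\frac{\delta\rho}{\delta u}$ via the hypothesis $\rho_*(\ad_\tau K)\equiv 0$) that the paper leaves implicit. The only minor over-caution is that the ``upgrade'' from $\int\rho_*(K)=0$ to $\rho_*(K)\in\im D_x$ is immediate from the paper's definition of the functional equivalence class, so it is not a genuine obstacle.
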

Combining Proposition \ref{pro1} and \ref{pro2},
If we know $\tau$ is a master symmetry, for any $a_i$ we can start from a conserved density $\rho_0$ of $u_t=a_i$
and define 
$$\rho_{1}\equiv\left\{\begin{array}{ll}
{\rho_{0}}_*(\ad_{\tau} a_i);& \mbox{when ${\rho_{0}}_*(\ad_{\tau} a_j)\not\equiv 0$}\\
{\rho_0}_*(\tau);& \mbox{when ${\rho_{0}}_*(\ad_{\tau} a_j)\equiv 0$}
\end{array}\right.$$
If $\rho_1\neq0$, we repeat the above procedure to define $\rho_2$. In this way we generate infinitely many conserved densities $\rho_i\neq 0$.
If there exists $k\in \mathbb{N}$ such that $\rho_k=0$,   equation $u_t=a_i$
likely possesses only finite number of conserved densities.

For Hamiltonian systems, master symmetries can also be used to generate
Hamiltonian vector fields \cite{mr94j:58081}. 
\begin{Pro}\label{pro3}
Let ${\cal H}$ be a Hamiltonian operator. Assume that $a_0$ is a Hamiltonian vector field, that is, there exists $f\in \A'$ such that 
$a_0={\cal H} \frac{\delta f_0}{\delta u}$. So is the master symmetry $\tau$. Then for all $n\in \mathbb{N}$, $a_n={\cal H} \frac{\delta f_n}{\delta u}$,
where $f_n\equiv{f_{n-1}}_*(\tau)$.
\end{Pro}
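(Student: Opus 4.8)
The plan is to argue by induction on $n$, the base case $n=0$ being exactly the hypothesis $a_0={\cal H}\frac{\delta f_0}{\delta u}$. The entire argument rests on one structural fact about a Hamiltonian operator ${\cal H}$: the assignment $f\mapsto X_f:={\cal H}\frac{\delta f}{\delta u}$ is a Lie algebra anti-homomorphism from the functionals $\A'$, equipped with the Poisson bracket $\{f,g\}:=\int \frac{\delta f}{\delta u}\,{\cal H}\frac{\delta g}{\delta u}$, to the evolutionary vector fields $\lieh$ with bracket (\ref{liebra}); concretely $[X_g,\,X_f]=X_{\{f,g\}}$. This is the only point at which the defining property of ${\cal H}$ is used, namely that $\{\cdot,\cdot\}$ satisfies the Jacobi identity, and I would invoke it from the standard references, e.g. \cite{mr94j:58081}. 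I would also note at the outset that $X_f$ depends only on the class of $f$ in $\A'$, since $\frac{\delta}{\delta u}$ annihilates $\im D_x$, so that every object in the induction is well defined at the level of functionals.

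For the inductive step I would use the assumption that the master symmetry $\tau$ is itself Hamiltonian, writing $\tau=X_g={\cal H}\frac{\delta g}{\delta u}$ for some $g\in\A'$. Granting $a_n=X_{f_n}$, the recursion and the homomorphism give $a_{n+1}=[\tau,\,a_n]=[X_g,\,X_{f_n}]=X_{\{f_n,g\}}$. It then remains only to identify $\{f_n,g\}$ with the prescribed $f_{n+1}=(f_n)_*(\tau)$. Since $f_n$ is a functional, its Fréchet derivative along $\tau$ is the directional derivative $(f_n)_*(\tau)=\int \frac{\delta f_n}{\delta u}\,\tau$, by the very definition of the variational derivative via integration by parts. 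Substituting $\tau={\cal H}\frac{\delta g}{\delta u}$ yields $(f_n)_*(\tau)=\int \frac{\delta f_n}{\delta u}{\cal H}\frac{\delta g}{\delta u}=\{f_n,g\}$, so $f_{n+1}=(f_n)_*(\tau)$ and $a_{n+1}=X_{f_{n+1}}={\cal H}\frac{\delta f_{n+1}}{\delta u}$, closing the induction.

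The genuinely substantive input is the homomorphism property, which holds precisely because ${\cal H}$ is Hamiltonian; everything else is bookkeeping. I expect the main delicacy to be the signs: the antisymmetry $\{f,g\}=-\{g,f\}$ coming from skew-adjointness of ${\cal H}$, together with the sign in the vector-field homomorphism relative to the convention (\ref{liebra}), must be matched so that $[X_g,\,X_{f_n}]$ lands on $+X_{\{f_n,g\}}$ rather than its negative; this is exactly what makes the clean formula $f_{n+1}=(f_n)_*(\tau)$ come out, as opposed to $-(f_n)_*(\tau)$. I would pin these signs down once on a simple model such as ${\cal H}=D_x$ and then carry them through. Finally I would remark that the master symmetry property of $\tau$ is not actually needed here beyond the hypothesis that $\tau$ is a Hamiltonian vector field: the recursion and the homomorphism do all the work, with the mutual commutativity of the $a_n$ being a separate consequence.
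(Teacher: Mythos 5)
Your proposal is correct and follows essentially the same route as the paper: both write $\tau={\cal H}\frac{\delta g}{\delta u}$, invoke the homomorphism property of ${\cal H}\frac{\delta}{\delta u}$ to get $[\tau,a_n]={\cal H}\frac{\delta\{g,f_n\}}{\delta u}$, identify the Poisson bracket with the directional derivative $(f_n)_*(\tau)$, and close by induction. Your added remarks on well-definedness in $\A'$ and on pinning down the sign conventions are sensible extra care but do not change the argument.
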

\begin{proof} Since $\tau$ is a Hamiltonian vector field, there exists $g\in \A'$ such that $\tau={\cal H}\frac{\delta g}{\delta u}$. 
Notice that 
\begin{eqnarray*}
 a_{1}=[\tau, a_{0}]= [{\cal H}\frac{\delta g}{\delta u}, {\cal H}\frac{\delta f_{0}}{\delta u}]
 ={\cal H} \frac{\delta \{g, f_{0}\}}{\delta u}={\cal H} \frac{\delta {f_{0}}_*(\tau)}{\delta u},
\end{eqnarray*}
where Poisson bracket $\{g, f_{0}\}$ is defined by Hamiltonian operator $\cal H$. So $a_1$ is a Hamiltonian vector field with Hamiltonian
$f_1\equiv{f_{0}}_*(\tau)$. By induction, we can prove the statement.
\end{proof}

Finally, we define the homogeneity for equation (\ref{eq}), which is a technical requirement for the approach to constructing master symmetries
presented in the next section.
\begin{Def}\label{def3}
We say equation (\ref{eq}) is homogeneous (or $\alpha$-homogeneous)
if there exist constant $\alpha$ and
$\kappa$ such that $[x u_x+\alpha u,\ K]=\kappa K$. 
For this fixed $\alpha$, we call $x u_x+\alpha u$ a scaling symmetry for the
equation.
\end{Def}
In next section, we show that we are able to construct $\liesl(2)$ around the
scaling symmetry. Thus, in other word, 
 when equation (\ref{eq}) is homogeneous, the right hand of the
equation is a basis for the weight space $M_{2 \kappa}$.
\section{Construction of master symmetries}\label{sec4}
In this section, we'll present a method to construct master symmetries based on the \(\liesl(2,\C)\) module. 
In fact, the concept of $K$-generators also reminds
one of the representation theory of \(\liesl(2,\C)\) \cite{ mr1874291}. In what
follows, we first present a natural $\liesl(2)$ for homogeneous equations and
further
construct an infinite dimensional module in the BGG category $\cO$.
\begin{Lem}\label{lem1}
Given a scaling $h=2 (x u_x +\alpha u)$, where $\alpha$ is constant, the
elements $e=u_x$ and $f=-(x^2 u_x+ 2 \alpha x u)$ form an $\liesl(2, \C)$.
\end{Lem}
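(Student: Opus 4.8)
The plan is to verify directly that the three elements $e=u_x$, $h=2(xu_x+\alpha u)$, and $f=-(x^2u_x+2\alpha x u)$ satisfy the defining relations (\ref{sl2}) under the Lie bracket (\ref{liebra}). Since each of $e$, $h$, $f$ is a concrete element of $\A$ (a polynomial in $x$ and the jet variables $u, u_x$), this reduces to three Fr\'echet-derivative computations. First I would compute $[h,\ e]$: with $h=2(xu_x+\alpha u)$ and $e=u_x$, I apply $[P,Q]=Q_\star(P)-P_\star(Q)$, using $e_\star=D_x$ and $h_\star=2(xD_x+\alpha)$, and expect to obtain $[h,e]=e_\star(h)-h_\star(e)=D_x(h)-2(xD_x+\alpha)(u_x)$. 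The term $D_x(h)=2(u_x+xu_{xx}+\alpha u_x)$ should cancel against most of $2(xu_{xx}+\alpha u_x)$, leaving $2u_x=2e$, as required.

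Next I would treat $[h,\ f]$ analogously. Here $f_\star = -(x^2 D_x + 2\alpha x)$, and the computation $[h,f]=f_\star(h)-h_\star(f)$ involves slightly more bookkeeping because both $f$ and its Fr\'echet derivative carry explicit $x$-dependence; one must be careful that $D_x$ acts on the explicit $x$ as well as on the jet variables. I expect the quadratic-in-$x$ terms to combine so that $[h,f]=-2f$. The most error-prone step, and the one I would flag as the main obstacle, is the bracket $[e,\ f]$, since it must reproduce $h$ exactly, including the correct coefficient $2$ and the correct $\alpha$-dependence. Computing $[e,f]=f_\star(e)-e_\star(f)=-(x^2D_x+2\alpha x)(u_x)+D_x(x^2u_x+2\alpha xu)$, the leading terms $-x^2u_{xx}$ and $x^2u_{xx}$ should cancel, and the remaining first-order terms should assemble into $2xu_x+2\alpha u=h$; tracking the product-rule contributions from the explicit factor $x^2$ and from $2\alpha x u$ is where a factor-of-two or sign slip would most easily creep in.

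Throughout, the only genuine subtlety is remembering that $D_x = \partial_x + \sum_k u_{(k+1)x}\partial/\partial u_{kx}$ contains the explicit $\partial_x$, so that all three elements are treated as functions of $x$ and the jets simultaneously; the Fr\'echet derivatives $P_\star$ by contrast differentiate only with respect to the jet variables. Once the three identities $[e,f]=h$, $[h,e]=2e$, $[h,f]=-2f$ are confirmed, the statement follows immediately, since these are precisely the relations (\ref{sl2}) defining $\liesl(2,\C)$. I do not anticipate needing any structural input beyond the explicit form of the bracket; the lemma is essentially a direct calculation, and the value of stating it is that it exhibits the scaling symmetry $h$ as sitting inside a canonical copy of $\liesl(2)$ with the distinguished raising and lowering operators $e$ and $f$.
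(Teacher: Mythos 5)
Your proposal is correct and takes essentially the same route as the paper, which simply states that the relations (\ref{sl2}) are verified by direct computation; your outline of the three Fr\'echet-derivative calculations (with $e_\star=D_x$, $h_\star=2(xD_x+\alpha)$, $f_\star=-(x^2D_x+2\alpha x)$) fills in exactly the check the paper leaves to the reader, and each bracket does come out as claimed.
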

\begin{proof}
It is easy to check the three given elements $h$, $e$ and $f$ satisfy the relations (\ref{sl2}).
\end{proof}
\begin{Rem}\label{rem1}
Note that for given $h$ and $e$ in the above lemma, the element $f$ is not unique. In fact they form $\liesl(2, \C)$ with any $f'=f+f_0$ when $f_0$ satisfies
$[e, f_0]=0$ and $[h, f_0]=-2 f_0$. For example,  in (\ref{sl2b}), instead of $-(x^2 u_x+2x u)$ we take $f=-(x^2 u_x +2 x u+2)$.
\end{Rem}
For convenience, we sometimes denote this $\liesl(2, \C)$ as $\liesl_d (2)$
indicating its three elements lie in the diagonal direction (see Diagram
\ref{dia1}).
Notice that $\ad_e w=\ad_{u_x} w=-\frac{\partial w}{\partial x}$ for $w\in \lieh$. So we can compute the inverse action of $\ad_e$.
 This means that for any $g \in \lieh$ we are able to find $w\in \lieh$ such that $\ad_e w=g$.
\begin{Lem}\label{lem2}
Let $w=\frac{1}{3} \left(x^3 u_x+3 \alpha x^2 u\right)$. We have $\ad_e w=f$. Moreover,
\begin{eqnarray}\label{adfn}
\ad_f^n w=\frac{n!}{3} \left(x^{n+3} u_x +(n+3) \alpha x^{n+2} u\right),  \quad n=0, 1, 2,\cdots 
\end{eqnarray}
and $\ad_h \ad_f^n w=-2(n+2)\ad_f^n w$.
\end{Lem}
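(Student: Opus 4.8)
The plan is to prove the three assertions in turn, using the observation recorded just before the lemma that $\ad_e w = -\partial w/\partial x$ on $\lieh$ together with the $\liesl(2)$ relations from Lemma~\ref{lem1}. First I would verify $\ad_e w = f$: since $\ad_e w = -\partial w/\partial x$ and $w = \frac13(x^3 u_x + 3\alpha x^2 u)$, differentiating in $x$ with the jet coordinates $u, u_x, \dots$ held fixed gives $\partial w/\partial x = x^2 u_x + 2\alpha x u$, hence $\ad_e w = -(x^2 u_x + 2\alpha x u) = f$ as required.

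For the closed form (\ref{adfn}) the key is to isolate a clean one-step recursion. Writing $g_n = x^{n+3} u_x + (n+3)\alpha x^{n+2} u$, so that $w = \frac13 g_0$ and the claim reads $\ad_f^n w = \frac{n!}{3}\,g_n$, I would compute the single bracket $[f, g_n]$ directly from (\ref{liebra}). Using $f_\star = -2\alpha x - x^2 D_x$ and $(g_n)_\star = (n+3)\alpha x^{n+2} + x^{n+3} D_x$, one expands $(g_n)_\star(f) - f_\star(g_n)$; the top-order $u_{xx}$-terms cancel and, collecting the $u_x$- and $u$-coefficients, one obtains $[f, g_n] = (n+1)\,g_{n+1}$. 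An immediate induction then gives $\ad_f^{n+1} w = [f, \ad_f^n w] = \frac{n!}{3}[f, g_n] = \frac{(n+1)!}{3}\,g_{n+1}$, which is exactly (\ref{adfn}).

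For the weight relation $\ad_h \ad_f^n w = -2(n+2)\,\ad_f^n w$ I would argue by weights rather than recompute. A short direct calculation shows $\ad_h w = -4 w$, so $w$ is a weight vector of weight $-4$. Since $[h, f] = -2f$, the Jacobi identity gives, for any weight vector $v$ with $\ad_h v = \mu v$,
\[
[h, [f, v]] = [[h,f], v] + [f, [h, v]] = (\mu - 2)[f, v],
\]
so $\ad_f$ lowers the weight by $2$. Starting from weight $-4$ and iterating shows $\ad_f^n w$ has weight $-4 - 2n = -2(n+2)$, which is the claimed relation.

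The main obstacle is the middle step. The bracket $[f, g_n]$ couples the first-order operator $f$ with explicit $x$-dependence, so the Fr\'echet-derivative expansion produces several monomial families ($x^{n+5}u_{xx}$, $x^{n+4}u_x$, $x^{n+3}u$) that must be tracked and recombined. The payoff is that the $u_{xx}$-terms cancel and the surviving coefficients collapse to the single factor $(n+1)$; checking this, in particular that the $u$-coefficient $\alpha[(n+3)(n+2)-2]$ factors as $\alpha(n+1)(n+4)$, is the only genuinely delicate bookkeeping. Everything else is formal once the recursion $[f, g_n] = (n+1)\,g_{n+1}$ is in hand.
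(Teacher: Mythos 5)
Your proposal is correct and follows essentially the same route as the paper: the identity $\ad_e w=f$ by direct differentiation, and the closed form (\ref{adfn}) by induction on a single Fr\'echet-derivative computation of the bracket $[f,\ad_f^n w]$, whose coefficients (including the factorisation $(n+3)(n+2)-2=(n+1)(n+4)$) you track correctly. The only (minor) divergence is the weight relation $\ad_h \ad_f^n w=-2(n+2)\ad_f^n w$, which the paper obtains by direct computation on the explicit formula while you derive it more structurally from $\ad_h w=-4w$ and the fact that $\ad_f$ lowers weights by $2$ via the Jacobi identity; both are fine.
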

\begin{proof}
By direct computation we have $\ad_e w=-\frac{\partial w}{\partial x}=f$ and $\ad_h \ad_f^n w=-2(n+2)\ad_f^n w$.

We now prove (\ref{adfn}) by induction. It is obviously true for $n=0$. Assume that the formula is valid for $n$. Then for $n+1$, we have
\begin{eqnarray*}
&&\ad_f^{n+1} w =[f,\ \ad_f^n w]=[-(x^2 u_x+ 2 \alpha x u),\ \frac{n!}{3} \left(x^{n+3} u_x +(n+3) \alpha x^{n+2} u\right)]\\
&&\quad = \frac{n!}{3}\left(x^2 D_x \left(x^{n+3} u_x +(n+3) \alpha x^{n+2} u\right)+2 \alpha x \left(x^{n+3} u_x +(n+3) \alpha x^{n+2} u\right)\right)
\\&&\qquad
-\frac{n!}{3} \left(x^{n+3} D_x(x^2 u_x+ 2 \alpha x u) +(n+3) \alpha x^{n+2} (x^2 u_x+ 2 \alpha x u)\right)
\\&&\quad
=\frac{n!}{3}\left((n+1)x^{n+4} u_x+(n+1) (n+4) \alpha x^{n+3} u \right)
=\frac{(n+1)!}{3}\left(x^{n+4} u_x+ (n+4) \alpha x^{n+3} u \right),
\end{eqnarray*}
which is the right hand side of (\ref{adfn}) for $n+1$. So we proved the statement.
\end{proof}
\begin{Rem}\label{rem2}
In Lemma \ref{lem2}, the element $w$ satisfying $\ad_e w=f$ and $\ad_h w=-4 w$ is not unique. We can take $w+w_0$ instead of $w$ as long as
$[e, w_0]=0$ and $[h, w_0]=-4w_0$.
\end{Rem}
From the above lemma, we know that all $\ad_f^n w$ are independent over $\C$. They generate an infinite dimensional space over $\C$. Together with Lemma \ref{lem1},
we get the following $\liesl_d(2)$ module:
\begin{eqnarray}\label{quoti}
 e \pile{\rTo^{f} \\ \lTo_{e}}   h \pile{\rTo^{f} \\ \lTo_{e}}  f \lTo_{e}  w \pile{\rTo^{f} \\ \lTo_{e}}  \ad_f w \pile{\rTo^{f} \\ \lTo_{e}} 
  \ad_f^2 w \cdots\cdots,  
\end{eqnarray}
where we used the following notation: 
\begin{Not}\label{not1}
In a diagram, \(f\stackrel{N}{\longrightarrow}g\) means $\ad_N f=\gamma g$, where \(\gamma\) is a nonzero constant. 
\end{Not}
This module is the quotient space $P(-4)/M(-4)$.  
This module and its freedom stated in both Remark \ref{rem1} and
Remark \ref{rem2} play an important role in constructing master symmetries 
for homogeneous evolution equations and computing  their time-dependent symmetries.

For a homogeneous evolutionary equation (\ref{eq}), no matter it is integrable or not, there exist the $\liesl_d(2)$ and further its infinite dimensional
module shown (\ref{quoti}). In the following theorem, we claim that $\ad_f K$ (or $\ad_{f'} K$ according to Remark \ref{rem1}) is a master symmetry for integrable equation $u_t=K[u]$ under certain conditions.

\begin{The}\label{thm1}
Take $a_0=e$ and
$a_1\in \lieh$ satisfying 
\begin{eqnarray*}
[a_0, a_1]=0 \quad \mbox{and} \quad [h, a_1]=[2 (x u_x +\alpha u), a_1]=\lambda
\ a_1, \  \mbox{where} \ \lambda>2 
\end{eqnarray*}
for a constant  $\alpha$. Let $f=-(x^2 u_x+ 2 \alpha x u)$.
Define $\tau=-\frac{1}{\lambda}\ad_f a_1$ and $a_{n+1}=[\tau, a_n]$ for $n=0, 1, 2, \cdots.$ Suppose that there exists an integer $s\geq 2$ such that
\begin{enumerate}
 \item[{\rm (i)}] $\qquad \qquad
 [a_{n-1}, a_n]=0, \quad n=1,2,\cdots, s.$
 \item[{\rm (ii)}]  For any $n\in \N$, from $[[a_n, a_{n+1}], a_{k}]=0$ for $0\leq k \leq s-1$ it follows that there exists $m\leq 2 s\in \N$ such that 
$[a_n, a_{n+1}]=\sum_{j=0}^{m} \gamma_j a_j$, where $\gamma_j$ are constant.
\end{enumerate}
Then all $a_n$ mutually commute, that is, $[a_i, a_j]=0,\ \ i,j=0, 1, 2, \cdots.$
\end{The}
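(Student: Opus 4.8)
The plan is to reduce the whole statement to the vanishing of the \emph{consecutive} brackets $b_n:=[a_n,a_{n+1}]$, and then to kill these using the weight grading supplied by $\ad_h$ together with hypothesis (ii). First I would record the grading. Since $\ad_h a_0=2a_0$ and, by the Jacobi identity together with $\ad_h a_1=\lambda a_1$, the element $\tau=-\frac1\lambda\ad_f a_1$ satisfies $\ad_h\tau=(\lambda-2)\tau$, an induction through $a_{n+1}=[\tau,a_n]$ shows that each $a_n$ is an $\ad_h$-eigenvector of weight $\mu_n:=2+n(\lambda-2)$. Because $\lambda>2$ these weights are strictly increasing, so $[a_i,a_j]$ is homogeneous of weight $\mu_i+\mu_j$; in particular $b_n$ has weight $w_n:=4+(2n+1)(\lambda-2)$.

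The engine of the argument is the identity, immediate from the Jacobi identity and $a_{m+1}=[\tau,a_m]$,
\[
\ad_\tau[a_i,a_j]=[a_{i+1},a_j]+[a_i,a_{j+1}].
\]
I would prove by induction on $n\ge s$ the assertion $H(n)$: all pairs among $a_0,\dots,a_n$ commute. For the base case $H(s)$, hypothesis (i) gives the gap-one brackets $[a_i,a_{i+1}]=0$ for $0\le i\le s-1$; a secondary induction on the gap then closes $H(s)$, since the displayed identity with $j=i+k-1$ expresses $[a_i,a_{i+k}]$ through a gap-$(k-1)$ and a gap-$(k-2)$ bracket, all with indices $\le s$. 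For the step $H(n)\Rightarrow H(n+1)$, the same identity with $j=n$ gives $[a_i,a_{n+1}]=\ad_\tau[a_i,a_n]-[a_{i+1},a_n]=0$ for every $i\le n-1$ directly from $H(n)$, so the only remaining pair is $b_n=[a_n,a_{n+1}]$.

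The heart of the proof, and the step I expect to be the real obstacle, is showing $b_n=0$. I would first verify the premise of (ii): for $0\le k\le s-1$ (hence $k\le n-1$) the Jacobi identity gives $[a_k,b_n]=[[a_k,a_n],a_{n+1}]+[a_n,[a_k,a_{n+1}]]$, and both terms vanish — the first because $[a_k,a_n]=0$ by $H(n)$, the second because $[a_k,a_{n+1}]=0$ was just established. Hypothesis (ii) then yields $b_n=\sum_{j=0}^m\gamma_j a_j$ with $m\le 2s$. Now the grading finishes the job: $b_n$ is homogeneous of weight $w_n$, whereas a nonzero $\gamma_j$ would force $\mu_j=w_n$, i.e. $j=2n+1+\frac{2}{\lambda-2}>2n+1\ge 2s+1>2s\ge m$, which is impossible. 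Comparing weight components (eigenvectors of $\ad_h$ for distinct eigenvalues are linearly independent) forces every $\gamma_j=0$, so $b_n=0$; this establishes $H(n+1)$, and since $H(n)$ then holds for all $n$, all the $a_i$ mutually commute.

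The points that need care are: that the secondary gap-induction really closes inside the fixed window $\{0,\dots,s\}$ for the base case; that the lower pair $[a_k,a_{n+1}]$ needed to check the premise of (ii) is genuinely available from $H(n)$, so that the argument is not circular; and, above all, the interplay of the bound $m\le 2s$ in (ii) with the range $n\ge s$ of the induction. It is precisely this numerology, together with the strict inequality $\lambda>2$, that places the weight $w_n$ strictly above the weights of $a_0,\dots,a_m$ and thereby forces $b_n$ to vanish.
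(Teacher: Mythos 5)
Your proposal is correct and takes essentially the same route as the paper's own proof: the weight grading $[h,a_n]=(2+n(\lambda-2))a_n$, a Jacobi-identity induction that reduces everything to the consecutive brackets $[a_n,a_{n+1}]$, and then hypothesis (ii) combined with the weight comparison (using $\lambda>2$ and $m\le 2s\le 2n$) to force these to vanish. The only difference is bookkeeping: your intermediate commutativity statement is the triangle ``all pairs among $a_0,\dots,a_n$'' rather than the paper's anti-diagonal region $[a_i,a_j]=0$ for $i+j\le 2l$, but the engine and the conclusion are identical.
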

Before we give the proof of this theorem, we first check the definition for $a_1=[\tau, a_0]$ is consistent with the definition of $\tau$. Indeed,
\begin{eqnarray*}
 [\tau, a_0]=[\tau, e]=-\frac{1}{\lambda} [[f, a_1], e]=\frac{1}{\lambda}\left([[e, f],a_1]+[[a_1, a_0],f]\right)=\frac{1}{\lambda}[h,a_1]=a_1.
\end{eqnarray*}
The proof of this theorem is very similar to the one for $\tau$-scheme in Dorfman's book \cite{mr94j:58081}.
However, to make the paper readable, we write it in details and begin with two lemmas.
\begin{Lem}\label{lem30}
Assume that $a_n$ are defined as in Theorem \ref{thm1}. Then $[h, a_n]=(n \lambda -2n +2) a_n$ for $n=0, 1, 2, \cdots$.
\end{Lem}
\begin{proof}
We prove the statement  by induction. When $n=0$, it is valid by assumption.
Assume it works for $n=l$.
When $n=l+1$, we get by the Jacobi identity
\begin{eqnarray*}
[h, a_{n+1}]&=&[h, [\tau, a_n]]=-[\tau, [a_n, h]]-[a_n, [h, \tau]]\\
&=&(n \lambda -2n +2) [\tau, a_n]-(\lambda-2) [a_n, \tau] =( (n+1)\lambda-2n) a_{n+1}.
\end{eqnarray*}
Here we used the fact that $\tau=\frac{1}{\lambda} [a_1, f]$ has the weight $\lambda -2$.
\end{proof}

\begin{Lem}\label{lem3}
Assume that $a_n$ are defined as in Theorem \ref{thm1} and $[a_{n-1}, a_n]=0$ for $n\leq l$. Then $[a_m, a_n]=0$ for all
$m+n\leq 2 l$ .
\end{Lem}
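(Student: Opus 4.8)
The plan is to reduce the whole statement to a single commutator recurrence coming from the Jacobi identity, and then to propagate the vanishing of the adjacent brackets $[a_{n-1},a_n]$ across each anti-diagonal $m+n=\mathrm{const}$ by an induction on $m+n$. The point is that the hypothesis gives us exactly one vanishing bracket on each relevant anti-diagonal, and the recurrence spreads that single zero to the entire anti-diagonal.

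First I would record the key identity. Since $a_{m+1}=[\tau,a_m]=\ad_\tau a_m$ and $\ad_\tau$ is a derivation of the Lie bracket (this is the Jacobi identity), one has
\[
\ad_\tau[a_m,a_n]=[\ad_\tau a_m,\,a_n]+[a_m,\,\ad_\tau a_n]=[a_{m+1},a_n]+[a_m,a_{n+1}],
\]
so that $[a_m,a_{n+1}]+[a_{m+1},a_n]=\ad_\tau[a_m,a_n]$. Writing $c^{(N)}_i=[a_i,a_{N-i}]$ for the entry in position $i$ of the anti-diagonal $m+n=N$, this identity expresses the level-$(N+1)$ brackets in terms of the level-$N$ brackets.

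Next I would run the induction on $N$. The base case $N=0$ is the trivial $[a_0,a_0]=0$. Assuming every bracket on level $N$ vanishes, the right-hand side above is $0$ for all $(m,n)$ with $m+n=N$, so on level $N+1$ we get $c^{(N+1)}_m=-c^{(N+1)}_{m+1}$ for every admissible $m$; hence $c^{(N+1)}_m=(-1)^m c^{(N+1)}_0$, i.e.\ the whole anti-diagonal is determined up to sign by a single value. It then remains to pin this value to zero using one already-known vanishing entry on each level. If $N+1=2k$ is even, the central entry $c^{(N+1)}_k=[a_k,a_k]=0$ forces $c^{(N+1)}_0=0$. If $N+1=2k+1$ is odd, I invoke the hypothesis $[a_{n-1},a_n]=0$ for $n\le l$: the entry $c^{(N+1)}_k=[a_k,a_{k+1}]$ vanishes precisely when $k\le l-1$, that is, when $N+1\le 2l-1$, which again forces $c^{(N+1)}_0=0$. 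Since even levels are always killed by the diagonal entry and odd levels up to $2l-1$ are killed by the hypothesis, every level $N+1\le 2l$ vanishes (the top even level $2l$ being handled by $[a_l,a_l]=0$), giving $[a_m,a_n]=0$ for all $m+n\le 2l$.

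The only delicate point I expect is the parity bookkeeping: checking that the hypothesis covers exactly the odd anti-diagonals that are needed (neither too few nor too many) and that the induction actually reaches the top level $2l$. Everything else is the mechanical derivation of the Jacobi recurrence and the alternation $c_m=-c_{m+1}$, which I would confirm by direct substitution, together with the trivial observation that $\ad_\tau$ annihilates a level all of whose entries are already zero.
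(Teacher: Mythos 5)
Your proof is correct and rests on the same mechanism as the paper's: the derivation property of $\ad_\tau$ (the Jacobi identity) forces the sign-alternation $[a_m,a_{n+1}]=-[a_{m+1},a_n]$ along each anti-diagonal $m+n=\mathrm{const}$ once the previous anti-diagonal vanishes, and each anti-diagonal is then pinned to zero by its central entry --- $[a_k,a_k]=0$ on even levels, the hypothesis $[a_k,a_{k+1}]=0$ on odd levels up to $2l-1$. The only difference from the paper is bookkeeping: you induct on the level $N=m+n$ one anti-diagonal at a time, whereas the paper inducts on $l$ and disposes of the two new levels $2l+1$ and $2l+2$ together at each step; the parity count in your argument is exactly right.
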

\begin{proof}
We again prove it by induction.
When $l=0$, this statement is trivially valid. Suppose that the statement is
valid for some $l$. We now prove this statement for $l+1$.

For all pairs of $m$ and $n$, we assume $m\leq n$ without losing any generality.
There are three cases: (i). $m+n\leq 2l$; (ii). $m+n=2l+1$ and (iii).
$m+n=2l+2$.
When $m+n\leq 2l$, this is covered by the induction assumption and thus there is
no need to prove.
For other two cases, without losing generality, we assume $m<n$. By the Jacobi identity, we have
\begin{eqnarray}
 [a_m, a_n]&=&[a_m, [\tau, a_{n-1}]]=-[\tau, [a_{n-1},
a_m]]-[a_{n-1},[a_m,\tau]]\nonumber\\
 &=&-[\tau, [a_{n-1}, a_m]]+[a_{n-1},a_{m+1}].\label{jacobi}
\end{eqnarray}
For case (ii), we know $[a_{n-1}, a_m]=0$ since $m+n-1\leq 2l$. So we get
\begin{eqnarray*}
 [a_m, a_n]= [a_{n-1},a_{m+1}]=-[a_{m+1}, a_{n-1}].
\end{eqnarray*}
By repeating applying this identity, we obtain $[a_m, a_n]=
(-1)^{\frac{n-m-1}{2}}[a_{l},a_{l+1}]$, which vanishes as an assumption.

\noindent
For case (iii), we have $[a_{n-1}, a_m]=0$ since $m+n-1\leq 2l+1$. In the same
way as we did for case (ii), we get 
$$[a_n, a_m]=[a_{l+1},a_{l+1}]=0$$ 
and thus we complete the proof.
\end{proof}
We are now ready to complete the proof of Theorem \ref{thm1}.
\begin{proof} To prove Theorem \ref{thm1}, we only need to show that $[a_{n-1}, a_n]=0$ for all $n\in \N$. We prove it by induction.

From condition (i), this identity is valid for $n=1,2,\cdots, s.$
Assume that it is valid for all $n\leq l$ and $l\geq s$. For $n=l+1$, we have by the Jacobi
identity
\begin{eqnarray*}
 [[a_l, a_{l+1}], a_{k}]=-[[a_{l+1},a_{k}],a_l]-[[a_{k},a_l],a_{l+1}]
\end{eqnarray*}
It follows from Lemma \ref{lem3} that $[a_{l+1},a_{k}]=[a_{k},a_l]=0$ since
$l+1+k\leq 2 l$ for $0\leq k\leq s-1$. This leads to 
\begin{eqnarray*}
 [[a_l, a_{l+1}], a_{k}]=0, \qquad 0\leq k\leq s-1.
\end{eqnarray*}
This implies that $[a_l, a_{l+1}]=\sum_{j=0}^{m} \gamma_j a_j$, where $m\leq
2s$ from condition (ii).  We use $\ad_h$ acting on both sides of it and it
follows that
\begin{eqnarray*}
&& 0=[h, [a_l, a_{l+1}]-\sum_{j=0}^{m} \gamma_j [h, a_j]\\
&&\quad =[[h, a_l], a_{l+1}]+[a_l, [h, a_{l+1}]]-\sum_{j=0}^{m} \gamma_j (j \lambda-2 j+2)\\
&&\quad =\sum_{j=0}^{m} \left(\lambda +(2l-j)(\lambda -2)\right) \gamma_j a_j .
\end{eqnarray*}
Since $l\geq s$ and $\lambda>2$, we have $\lambda +(2l-j)(\lambda
-2)>2$ for all $0\leq j\leq m \leq 2 s$. Thus all $\gamma_j=0$
since $a_j$ are independent over $\C$. This leads to $[a_l, a_{l+1}]=0$ and we complete the proof.
\end{proof}
Following from this theorem we know $[e, a_n]=0$ and $a_n$ with weight $(\lambda -2)n+2$, where $\lambda>2$.
Therefore, starting from the highest weight vector $a_n$ the space generated by $\ad_f^j a_n,$ where $j=0, 1, 2, \cdots$, is an $\liesl(2, \C)=\{e, h, f\}$
module in the BGG category $\cO$. For a fixed $n$, if there exists $k\in \mathbb{N}$ such that $\ad_f^{k+1} a_n=0$, then
$\{\ad_f^j a_n, j=0, 1, 2, \cdots, k \}$
is a finite dimensional simple module $L((\lambda -2)n+2)=L(k)$. In this case,
we can find $\bar w$ such that $\ad_e \bar w=\ad_f^k a_n$ as in Lemma
\ref{lem2}. Thus we have 
\begin{eqnarray*}
\{\ad_f^j a_n, \ad_f^l \bar w, j=0, 1, 2, \cdots, k; l=0,1, 2 , \cdots \}
\end{eqnarray*}
is a $\liesl(2)$ module $P(-(\lambda -2)n-4)/M(-(\lambda -2)n-4)$.

The immediate application of Theorem \ref{thm1} lies in constructing a master symmetry for equation (\ref{eq}):
\begin{The}\label{thm2}
For homogeneous evolutionary equation $u_t=K[u]$ satisfying 
$$[u_x, K]=0 \quad \mbox{and} \quad [x u_x+\alpha u, K]=\kappa K, \quad \kappa>1$$ 
for a certain constant $\alpha$. Define 
$$\tau=\frac{1}{2 \kappa} [x^2 u_x+ 2 \alpha x u, K]$$
and $a_{n+1}=[\tau , a_n]$ with  $a_0=e=u_x$. Assume that
\begin{enumerate}
 \item[{\rm (i)}] $\qquad \qquad
 [[\tau, K], K]=0; $
 \item[{\rm (ii)}]  If there is a Lie subalgebra $\lieh'$ such that $a_n\in \lieh'$ for all $n=0, 1, 2, \cdots$. Moreover, 
 for any $P, Q \in \lieh'$ satisfying $[P, e]=[Q, e]=[P, K]=[Q, K]=0$, it
follows that $[P, Q]=0$.
\end{enumerate}
Then $\tau$ is a master symmetry of equation $u_t=K[u]$, that is, all $a_n$ mutually commute.
\end{The}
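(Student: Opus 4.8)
The plan is to recognise Theorem \ref{thm2} as the specialisation of Theorem \ref{thm1} obtained by taking $a_1=K$, and then to run the same induction with condition (ii) of Theorem \ref{thm2} supplying the closing step. First I would match the data to the setup of Theorem \ref{thm1}. With $h=2(x u_x+\alpha u)$, $e=u_x$ and $f=-(x^2 u_x+2\alpha x u)$ as in Lemma \ref{lem1}, the homogeneity hypothesis $[x u_x+\alpha u,K]=\kappa K$ gives $[h,K]=2\kappa K$, so putting $a_1=K$ and $\lambda=2\kappa$ one has $[a_0,a_1]=[e,K]=0$ and $[h,a_1]=\lambda a_1$ with $\lambda=2\kappa>2$. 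Moreover $-\tfrac{1}{\lambda}\ad_f a_1=\tfrac{1}{2\kappa}[x^2 u_x+2\alpha x u,K]$ is exactly the given $\tau$, and the usual Jacobi computation $[\tau,e]=-\tfrac1\lambda[[f,K],e]=\tfrac1\lambda([h,K]+[[K,e],f])=K$ confirms that $a_1=[\tau,a_0]=K$, so the recursion $a_{n+1}=[\tau,a_n]$ is precisely that of Theorem \ref{thm1}. Consequently Lemma \ref{lem3}, whose proof uses only this recursive structure and the Jacobi identity, applies verbatim to the present sequence.

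By Lemma \ref{lem3} the whole conclusion reduces to proving $[a_{n-1},a_n]=0$ for every $n\ge 1$, which I would establish by induction. The base cases are immediate: $[a_0,a_1]=[e,K]=0$, and $[a_1,a_2]=[K,[\tau,K]]=-[[\tau,K],K]=0$ by condition (i). This secures $[a_{n-1},a_n]=0$ for $n=1,2$, i.e.\ the $s=2$ hypothesis feeding Lemma \ref{lem3}.

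For the inductive step, suppose $[a_{m-1},a_m]=0$ for all $m\le l$ with $l\ge 2$. Lemma \ref{lem3} then gives $[a_i,a_j]=0$ whenever $i+j\le 2l$. The decisive observation is that this forces both $a_l$ and $a_{l+1}$ into the centraliser of $\{e,K\}=\{a_0,a_1\}$: one has $[a_l,e]=0$ (sum $l\le 2l$), $[a_l,K]=0$ (sum $l+1\le 2l$), $[a_{l+1},e]=0$ (sum $l+1\le 2l$), and $[a_{l+1},K]=0$ (sum $l+2\le 2l$), the last holding exactly because $l\ge 2$. Since every $a_n$ lies in the subalgebra $\lieh'$, condition (ii) of Theorem \ref{thm2} applies with $P=a_l$ and $Q=a_{l+1}$ and yields $[a_l,a_{l+1}]=0$, closing the induction; a final appeal to Lemma \ref{lem3} gives $[a_i,a_j]=0$ for all $i,j$, so $\tau$ is a master symmetry.

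The point to handle carefully — more bookkeeping than genuine obstacle — is the index inequality $l+2\le 2l$ ensuring $a_{l+1}$ still commutes with $K=a_1$; this is precisely why the induction must be seeded up to $n=2$ and hence why condition (i) is indispensable. It is worth noting that, in contrast to the proof of Theorem \ref{thm1}, no weight argument (Lemma \ref{lem30}) is needed here: condition (ii) of Theorem \ref{thm2} delivers $[a_l,a_{l+1}]=0$ outright, because it asserts that the centraliser of $\{e,K\}$ inside $\lieh'$ is abelian, rather than merely that $[a_l,a_{l+1}]$ lies in the span of the $a_j$. The hypothesis $\kappa>1$ (equivalently $\lambda>2$) and the $\liesl(2)$ framing serve to place the $a_n$ in distinct weight spaces and keep the construction consistent with the module (\ref{quoti}), but the commutativity argument itself rests only on Lemma \ref{lem3} together with conditions (i) and (ii).
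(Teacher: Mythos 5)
Your proof is correct and follows essentially the same route as the paper, whose own proof simply verifies $a_1=[\tau,a_0]=K$ and invokes Theorem \ref{thm1} with $\lambda=2\kappa$ and $s=2$. Your unfolding of that reduction is a worthwhile supplement: since condition (ii) of Theorem \ref{thm2} is not verbatim an instance of condition (ii) of Theorem \ref{thm1}, closing the induction directly via Lemma \ref{lem3} and the abelian-centraliser hypothesis (thereby bypassing the weight argument of Lemma \ref{lem30}) is exactly the right way to make the paper's two-line argument rigorous.
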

\begin{proof} Following the definition, we have  $a_1=[\tau, a_0]=\frac{1}{2 \kappa} [[-f, K], e]=K$. 
Thus the given conditions satisfy Theorem \ref{thm1} for $s=2$.  This leads to $\tau$ is a master symmetry of the equation.
\end{proof}
This theorem provide us a method to construct master symmetries of homogeneous
evolutionary equations. Moreover, starting from each $a_n$ obtained from this
theorem, we can construct an infinite dimensional $\liesl(2)$ module.  We
shall refer it as the $\cO$-scheme and present it in the following
diagram:
\begin{figure}[hz]
\captionsetup{name=Diagram}
\centering
\begin{diagram}
  u_x        &         &           &          &            &             &        &           &\\
 \dTo_{\tau}   & \rdTo^f &           &          &            &             &        &           &\\
 K        &         &  h        &          &            &             &         &           &\\
\dTo_{\tau} & \rdTo^f &           & \rdTo^f  &            &             &         &           &\\
a_2         &         & \tau      &          & f          &             &         &            &\\
\dTo_{\tau} & \rdTo^f &           & \rdTo^f  &            &    \luTo^{e} &        &            &\\
a_3         &         & \ad_f a_2 &         & \ad_f^2 a_1  &             &    w    &           &\\
\dTo_{\tau} & \rdTo^f &           & \rdTo^f  &            &    \rdTo^f   &        &  \rdTo^f   &\\
            &         &           &          &            &              &        &             &
\end{diagram}
\caption{ The $\cO$-scheme for
$u_t=K[u]$}\label{dia1}
\end{figure}

We are going to show that the elements in the horizontal directions
are related to the $t$-coefficients of $t$-dependent symmetries for equation
$u_t=K$. First we make a few remarks on Theorem \ref{thm2}.
\begin{Rem}\label{rem3}
\begin{enumerate}
\item The condition $\kappa>1$ is also trivially valid for all scalar homogeneous equations with linear terms of order greater than $1$.
\item Often we ignore the constant factor in $\tau$ (since it doesn't affect the commutativity of $a_j$) and simply say  $\tau=[x^2 u_x+ 2 \alpha x u, K]$.
\item  Note that condition {\rm (ii)} in Theorem \ref{thm1} is a technical condition, which only used once in the proof. The corresponding condition in Theorem \ref{thm2}
is a special case. 

It is valid  for scalar homogeneous polynomial equations with
linear terms. From the construction, we know all $a_n$ belong to the Lie subalgebra of homogeneous polynomial vector fields with nonzero linear terms.
For such equations, all $x$, $t$-independent polynomial
symmetries start with linear terms \cite{MR99g:35058}. The Lie bracket of two
symmetries, which is also a symmetry, has no linear terms. So it must be zero.

This argument can be generalised to multi-component homogeneous polynomial integrable systems with nonzero linear terms \cite{mr99i:35005}. 
The examples in next section all belong to this class.
\end{enumerate}
\end{Rem}

\begin{Lem}\label{lem50}
Under the conditions of Theorem \ref{thm2}, for any $n\in \N$ we have 
$$\ad_{e} \ad_K^{n-1} \ad_f^{n-1} a_j=\ad_{K}^{n} \ad_f^{n-1} a_j=0,  \quad j=0, 1, 2, \cdots. $$
\end{Lem}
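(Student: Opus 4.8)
The plan is to reduce everything to the single family of identities
\[
\ad_K^{r+1}\ad_f^{r}a_i=0,\qquad r\ge 0,\ i\ge 0,
\]
(which is the first claimed identity after reindexing $r=n-1$), and then to read off the second identity from it. Throughout I use the facts now available: since $\tau$ is a master symmetry (Theorem~\ref{thm2}) the $a_n$ mutually commute, so $\ad_e a_j=[a_0,a_j]=0$ and $\ad_K a_j=[a_1,a_j]=0$; that $\ad_\tau a_j=a_{j+1}$; that $\ad_e\ad_f^{m}a_j=m(\mu_j-m+1)\ad_f^{m-1}a_j$ with $\mu_j=2j(\kappa-1)+2$ the weight of $a_j$ (Lemma~\ref{lem30} and the Verma action of $\liesl_d(2)$); and the operator relations $[\ad_K,\ad_f]=2\kappa\ad_\tau$ (from $[K,f]=2\kappa\tau$), $[\ad_K,\ad_\tau]=-\ad_{a_2}$ (from $[K,\tau]=-a_2$) and $[\ad_K,\ad_{a_2}]=0$ (from $[a_1,a_2]=0$). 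Granting the displayed family, the second identity is immediate: because $\ad_e$ commutes with $\ad_K$,
\[
\ad_e\ad_K^{n-1}\ad_f^{n-1}a_j=\ad_K^{n-1}\ad_e\ad_f^{n-1}a_j=(n-1)(\mu_j-n+2)\,\ad_K^{n-1}\ad_f^{n-2}a_j,
\]
and the right-hand side vanishes by the instance with $r=n-2$.

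It remains to prove the displayed family by induction on $r$, the base case $r=0$ being $\ad_K a_i=0$. The engine for the inductive step is to push the outermost $\ad_K$'s through $\ad_f$ (picking up $\ad_\tau$) and then through $\ad_\tau$ (picking up $\ad_{a_2}$); using $[\ad_K,\ad_{a_2}]=0$, $\ad_K a_i=0$ and the inductive hypothesis $\ad_K^{r+1}\ad_f^{r}a_i=0$, every interior term collapses and one is left with
\[
\ad_K^{r+2}\ad_f^{r+1}a_i=-\kappa(r+1)(r+2)\,\ad_K^{r}[a_2,\ad_f^{r}a_i].
\]
Thus the family at level $r+1$ follows once we know the helper $\ad_K^{r}[a_2,\ad_f^{r}a_i]=0$; since $\ad_{a_2}$ commutes with $\ad_K$, this is equivalent to $[a_2,\ad_K^{r}\ad_f^{r}a_i]=0$, i.e.\ to $\ad_K^{r}\ad_f^{r}a_i$ lying in the centraliser of the $a$-tower.

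My primary route to the helper exploits the module structure the paper stresses. Within the induction I may assume the displayed family at level $r-1$, and then
\[
\ad_e\,\ad_K^{r}\ad_f^{r}a_i=\ad_K^{r}\ad_e\ad_f^{r}a_i=r(\mu_i-r+1)\,\ad_K^{r}\ad_f^{r-1}a_i=0,
\]
so $\ad_K^{r}\ad_f^{r}a_i$ is a highest-weight vector of weight $\mu_{i+r}$, the same weight as $a_{i+r}$. Whenever that weight space is one-dimensional — as it is in the concrete $\cO$-schemes of Section~\ref{sec5} — one concludes $\ad_K^{r}\ad_f^{r}a_i\in\C\,a_{i+r}$, which commutes with $a_2$, so the helper holds and the induction closes.

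The hard part is justifying this one-dimensionality in the abstract homogeneous setting, where the weight multiplicities of the ambient space are not controlled a priori. The fallback is a direct proof of the helper by a second induction coupled to the first, based on the Leibniz identity $[a_m,\ad_f^{r}a_i]=\ad_f[a_m,\ad_f^{r-1}a_i]-[\ad_f a_m,\ad_f^{r-1}a_i]$ together with the same push-through engine and the bottom case $[a_p,a_q]=0$. The genuine obstacle lives here: the Leibniz step replaces a factor $a_m$ by $\ad_f a_m$ and, once the engine is applied, generates triple brackets such as $\ad_K^{r-1}[a_2,[a_m,\ad_f^{r}a_i]]$, so no single reduction closes. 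I would therefore work in the invariant subspace spanned by all iterated brackets of the ``$f$-dressed'' generators $\ad_f^{p}a_i$, closed under $\ad_f$, $\ad_K$, $\ad_\tau$ and every $\ad_{a_m}$, and equip it with a filtration whose associated graded is annihilated by the relevant power of $\ad_K$ and bottoms out at $[a_p,a_q]=0$; the subtle point is that the naive count of $\ad_f$'s is not a grading (for instance $[\tau,a_i]=a_{i+1}$ lowers it), so the filtration must be chosen so that brackets of two or more $f$-dressed symmetries sit strictly below single ones.
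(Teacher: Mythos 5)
Your reduction engine is exactly the paper's: the expansion $\ad_K^m\ad_f=\sum_{i=0}^{2}\binom{m}{i}\ad_{(\ad_K^i f)}\ad_K^{m-i}$ (valid because $\ad_K f=2\kappa\tau$, $\ad_K^2 f=-2\kappa a_2$ and $\ad_K^3 f=0$) collapses the inductive step to the single term proportional to $\ad_{a_2}\ad_K^{r}\ad_f^{r}a_i$, i.e.\ to your ``helper'' $[a_2,\ad_K^{r}\ad_f^{r}a_i]=0$; your constant $-\kappa(r+1)(r+2)$ matches $\binom{r+2}{2}\cdot(-2\kappa)$, and your derivation of the $\ad_e$-identity from the $\ad_K$-identity is correct (the paper does the same thing slightly more laboriously, pushing $\ad_e$ through the $\ad_f$'s one at a time and discarding the $\ad_h$ terms). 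But at the decisive point you leave a genuine gap, and it is an unnecessary one. The helper requires neither one-dimensionality of the relevant weight space nor the filtration machinery you sketch and admit you cannot close: it is an immediate consequence of condition (ii) of Theorem \ref{thm2}, which is a standing hypothesis of the lemma (``under the conditions of Theorem \ref{thm2}'') and which you never invoke. That condition says precisely that any two elements of $\lieh'$ each commuting with both $e$ and $K$ must commute with each other. You have already produced both prerequisites for the element $\ad_K^{r}\ad_f^{r}a_i$: it commutes with $K$ by the inductive hypothesis $\ad_K^{r+1}\ad_f^{r}a_i=0$, and it commutes with $e$ by your own computation $\ad_e\ad_K^{r}\ad_f^{r}a_i=r(\mu_i-r+1)\ad_K^{r}\ad_f^{r-1}a_i=0$; likewise $a_2$ commutes with $e=a_0$ and $K=a_1$ since the $a_n$ mutually commute. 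Condition (ii) then gives $[a_2,\ad_K^{r}\ad_f^{r}a_i]=0$ outright and the induction closes.

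The two routes you offer instead are both inadequate as written. The ``primary route'' identifies $\ad_K^{r}\ad_f^{r}a_i$ as a highest-weight vector of weight $\mu_{i+r}$ and then needs that weight space of the centraliser to be one-dimensional, which, as you concede, is not available in the abstract setting of Theorem \ref{thm2}; condition (ii) is exactly the paper's substitute for such a multiplicity statement (it forces the ``$e$- and $K$-invariant part'' of $\lieh'$ to be abelian, which is all that is needed). The ``fallback'' double induction via Leibniz identities is left as a programme: you correctly observe that the triple brackets it generates do not close under any naive $\ad_f$-count, and no workable filtration is exhibited. So the proposal, while it isolates the right intermediate identity and proves the easy half of the lemma correctly, does not constitute a proof; the single missing ingredient is the appeal to hypothesis (ii).
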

\begin{proof}
We prove the statement by induction. For $n=1$, it is the result of Theorem \ref{thm2}. Assume it has been proved for all value less than $n$.
We will prove the same statement for $n$. First notice that for all $m\in \N$
we have
\begin{eqnarray}\label{change}
 \ad_K^m \ad_f=\sum_{i=0}^m \binom{m}{i} \ad_{(\ad_K^i f)} \ad_K^{m-i} =\sum_{i=0}^2 \binom{m}{i} \ad_{(\ad_K^i f)} \ad_K^{m-i}
\end{eqnarray}
since $\ad_K^i f=0$ for $i\geq 3$. Thus
\begin{eqnarray*}
\ad_K^n \ad_f^{n-1} a_j=\sum_{i=0}^2  \binom{n}{i} \ad_{(\ad_K^i f)} \ad_K^{n-i} \ad_f^{n-2} a_j
=\binom{n}{2} \ad_{(\ad_K^2 f)} \ad_K^{n-2} \ad_f^{n-2} a_j.
\end{eqnarray*}
Here we used the induction assumption $\ad_{K}^{n-1} \ad_f^{n-2} a_j=0$.
Moreover, we know both $\ad_K^2 f$ and $\ad_K^{n-2} \ad_f^{n-2} a_j$
commute with $K$ and $e=u_x$. Thus  we have $\ad_{(\ad_K^2 f)} \ad_K^{n-2} \ad_f^{n-2}
a_j=0$ by the condition (ii) in Theorem \ref{thm2}.

We now show that $\ad_{e} \ad_K^{n-1} \ad_f^{n-1} a_j=0$.
\begin{eqnarray*}
&&\ad_{e} \ad_K^{n-1} \ad_f^{n-1} a_j= \ad_K^{n-1} \ad_e \ad_f^{n-1} a_j
 = \ad_K^{n-1} \ad_h \ad_f^{n-2} a_j+ \ad_K^{n-1} ad_f \ad_e \ad_f^{n-2} a_j\\
&&\quad = \ad_K^{n-1} ad_f^2 \ad_e \ad_f^{n-3} a_j=\ad_K^{n-1} ad_f^{n-1} \ad_e  a_j=0.
\end{eqnarray*}
Here we used $\ad_K^{n-1} \ad_h \ad_f^{n-2} a_j=\ad_K^{n-2} \ad_h \ad_K \ad_f^{n-2} a_j=0$ and
the induction assumption $\ad_{K}^{n-1} \ad_f^{n-2} a_j=0$.
 Thus we complete the proof.
\end{proof}

Notice that $\ad_f^{n} a_j$ may be zero for fixed $j$ when $n$ is bigger. For a given example, we can determine the values of $n$ for fixed $j$.
If $\ad_f^{n} a_j\neq 0$, they are $K$-generators of degree $n$ following from Lemma \ref{lem50}. So the $t$-dependent
symmetries for equation (\ref{eq}) are
\begin{eqnarray*}
 \exp(-t \ad_K) \left( \ad_f^{n} a_j \right).
\end{eqnarray*}

Note that there is a unique independent symmetry given the degree in \(t\)
and the order of polynomial. E.g., \(\frac{\partial}{\partial t}
\left(\exp (-t \ad_K) \ad_f^2 a_1\right)\)
is dependent with \(\exp (-t \ad_K) \ad_f a_2\) in the sense that
the difference between these two can be expressed as the sum of
low order symmetries.
Therefore we only need the elements on the contour of the diagram 
to generate the independent symmetries. 
The following theorem will answer when $\ad_f^j w$ for $j=0,1,2, \cdots,$ are $K$-generators.
\begin{Lem}\label{lem60}
Assume that there exists $r\in \N$ such that $\ad_K^r w=0$ and $\ad_e\ad_K^{r-1} w=0$. Under the conditions
of Theorem \ref{thm2}, we have
$$\ad_e \ad_{K}^{n+r-1} \ad_f^{n} w=\ad_{K}^{n+r} \ad_f^{n} w=0, \quad  n=0, 1, 2, \cdots. $$
\end{Lem}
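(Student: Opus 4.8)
The statement to be proved is a two-index generalisation of Lemma~\ref{lem50}: there we took $a_j$, which satisfies $\ad_K a_j=0$ and $\ad_e a_j=0$, as the seed element, and proved $\ad_e\ad_K^{n-1}\ad_f^{n-1}a_j=\ad_K^n\ad_f^{n-1}a_j=0$. Here the seed is $w$, which is only annihilated after $r$ applications of $\ad_K$ (i.e.\ $\ad_K^r w=0$ and $\ad_e\ad_K^{r-1}w=0$), rather than after a single one. The plan is therefore to run the \emph{same} double induction on $n$, but with the exponent of $\ad_K$ shifted from $n$ up to $n+r$ to absorb the fact that $w$ needs $r$ steps to die. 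For $n=0$ the claim reads $\ad_e\ad_K^{r-1}w=0$ and $\ad_K^r w=0$, which are exactly the two hypotheses, so the base case is free.

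**The inductive step via the commutator identity \eqref{change}.**
The engine is formula \eqref{change}, $\ad_K^m\ad_f=\sum_{i=0}^2\binom{m}{i}\ad_{(\ad_K^i f)}\ad_K^{m-i}$, valid because $\ad_K^i f=0$ for $i\geq 3$. First I would establish $\ad_K^{n+r}\ad_f^n w=0$. Applying \eqref{change} with $m=n+r$ peels one $\ad_f$ off the front:
\begin{eqnarray*}
\ad_K^{n+r}\ad_f^n w=\sum_{i=0}^2\binom{n+r}{i}\ad_{(\ad_K^i f)}\ad_K^{n+r-i}\ad_f^{n-1}w.
\end{eqnarray*}
The $i=0$ and $i=1$ terms both contain $\ad_K^{n+r-1}\ad_f^{n-1}w$, which is $\ad_K^{(n-1)+r}\ad_f^{n-1}w=0$ by the induction hypothesis at $n-1$. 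Only the $i=2$ term survives, leaving $\binom{n+r}{2}\ad_{(\ad_K^2 f)}\ad_K^{n+r-2}\ad_f^{n-1}w$. Now $\ad_K^{n+r-2}\ad_f^{n-1}w=\ad_K^{(n-1)+r-1}\ad_f^{n-1}w$, which the induction hypothesis tells us commutes with $e=u_x$ (that is the $\ad_e\ad_K^{\bullet}$ half of the statement), and it commutes with $K$ after one more $\ad_K$; together with the fact that $\ad_K^2 f$ commutes with both $e$ and $K$, condition~(ii) of Theorem~\ref{thm2} forces this last bracket to vanish. That disposes of the $\ad_K$-half.

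**The $\ad_e$-half and the anticipated obstacle.**
For the other half I would compute $\ad_e\ad_K^{n+r-1}\ad_f^n w$ exactly as in Lemma~\ref{lem50}: commute $\ad_e$ through $\ad_K^{n+r-1}$, then push it through the $\ad_f^n$ using the $\liesl(2)$ relation $\ad_e\ad_f=\ad_h+\ad_f\ad_e$ repeatedly, so that $\ad_e\ad_f^n w=\ad_f^{n-1}(\text{terms in }\ad_h)+\ad_f^n\ad_e w$. Each $\ad_h$ term reduces the $\ad_f$-count and, after reinserting $\ad_K^{n+r-1}$ and using $\ad_K\ad_h=\ad_h\ad_K-2\ad_K$ (so $\ad_h$ can be slid toward the already-annihilated lower-index data), collapses by the induction hypothesis; the residual $\ad_f^n\ad_e w$ term needs the hypothesis on $w$, and here I would want $\ad_e w$ itself to reduce to a previously handled expression. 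The step I expect to be the main obstacle is the bookkeeping in exactly \emph{this} $\ad_e$-half: in Lemma~\ref{lem50} the seed satisfied $\ad_e a_j=0$ outright, killing the leftover $\ad_f^n\ad_e a_j$ term instantly, whereas here $\ad_e w\neq 0$ in general, so I must instead feed $\ad_e w$ back into the hypothesis $\ad_e\ad_K^{r-1}w=0$ and track how the shifted exponent $n+r-1$ interacts with the $\ad_h$-generated correction terms. Getting the $\ad_K$-powers to line up so that every correction term matches an instance of the induction hypothesis (rather than falling one power short) is the delicate part; I would budget most of the care for verifying that the sliding identities for $\ad_h$ and $\ad_f$ produce exponents that are genuinely covered by the $n-1$ case.
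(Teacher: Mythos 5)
Your first half --- the proof that $\ad_K^{n+r}\ad_f^n w=0$ --- is complete and coincides with the paper's argument: peel off one $\ad_f$ with (\ref{change}), kill the $i=0,1$ terms by the induction hypothesis at $n-1$, and dispose of the $i=2$ term by condition (ii) of Theorem \ref{thm2} after checking that both $\ad_K^2 f$ and $\ad_K^{n+r-2}\ad_f^{n-1}w$ commute with $e$ and $K$. No issues there.

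The gap is in the $\ad_e$-half, which you set up correctly but do not close, and the route you propose for closing it is not the one that works. After writing $\ad_e\ad_f^{n}w=\sum_{k=0}^{n-1}\ad_f^{k}\ad_h\ad_f^{n-1-k}w+\ad_f^{n}\ad_e w$, the correction terms are harmless for exactly the reason you expect: by Lemma \ref{lem2} each $\ad_f^{n-1-k}w$ is an $h$-weight vector, so $\ad_f^{k}\ad_h\ad_f^{n-1-k}w$ is a scalar multiple of $\ad_f^{n-1}w$, and $\ad_K^{n+r-1}\ad_f^{n-1}w=\ad_K^{(n-1)+r}\ad_f^{n-1}w=0$ by the induction hypothesis at $n-1$; there is no delicate exponent bookkeeping. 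The residual term, which you flag as the main obstacle and propose to handle by ``feeding $\ad_e w$ back into the hypothesis $\ad_e\ad_K^{r-1}w=0$,'' in fact vanishes for a much more elementary reason that your proposal misses: the $w$ in this lemma is the element of Lemma \ref{lem2} (equivalently, the $w$ of Theorem \ref{thm3}), so $\ad_e w=f$, and hence for $n\geq 1$
$$\ad_f^{n}\ad_e w=\ad_f^{n}f=\ad_f^{n-1}[f,f]=0$$
identically --- no appeal to $\ad_e\ad_K^{r-1}w=0$ (which only enters as the base case $n=0$) and no interaction with the shift $r$ is needed. Without the identification $\ad_e w=f$ your argument cannot terminate, since the hypothesis you want to invoke controls $\ad_e\ad_K^{r-1}w$, not $\ad_e w$ itself.
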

\begin{proof} The proof of this theorem is similar to Lemma \ref{lem50}. 
We prove the statement by induction. For $n=0$, it is assumption of the
statement. Assume it has been proved for all value less than $n$.
We will prove the same statement for $n$. Using (\ref{change}), we get
\begin{eqnarray*}
\ad_K^{n+r} \ad_f^{n} w=\sum_{i=0}^2  \binom{n+r}{i} \ad_{(\ad_K^i f)}
\ad_K^{n+r-i} \ad_f^{n-1} w
=\binom{n}{2} \ad_{(\ad_K^2 f)} \ad_K^{n+r-2} \ad_f^{n-1} w.
\end{eqnarray*}
Here we used the induction assumption $\ad_{K}^{n+r-1} \ad_f^{n-1} w=0$.
Since both $\ad_K^{n+r-2} \ad_f^{n-1} w$  and $\ad_K^2 f$ commute with $K$ and $e=u_x$,
so $\ad_{(\ad_K^2 f)} \ad_K^{n+r-2} \ad_f^{n-1} w=0$ by the condition (ii) in
Theorem \ref{thm2}.

We now show that $\ad_e \ad_{K}^{n+r-1} \ad_f^{n} w=0$.
\begin{eqnarray*}
&&\ad_e \ad_{K}^{n+r-1} \ad_f^{n} w= \ad_{K}^{n+r-1} \ad_e \ad_f^{n} w
 = \ad_{K}^{n+r-1} \ad_h \ad_f^{n-1} w+\ad_{K}^{n+r-1}\ad_f  \ad_e \ad_f^{n-1} w\\
&&\quad = \ad_{K}^{n+r-1} \ad_f^{n} \ad_e w=\ad_{K}^{n+r-1} \ad_f^{n} f=0.
\end{eqnarray*}
Thus we proved the statement.
\end{proof}

Following from Diagram \ref{dia1} and Lemma \ref{lem50}, instead of direct searching for the integer $r$ such that $\ad_K^r w=0$  we
check whether there exists a constant $\gamma\neq 0$ such that 
$$\ad_K w=\gamma \ad_f^2 K=\gamma \ad_f^2 a_1.$$
If this is the case, then $w$ is a $K$-generator of degree $3$, i.e., $\ad_K^4 w=0$. Moreover, $\gamma$ can be determine by the weight of $K$.
\begin{Lem}
Let $f$ $h$ and $w$ be defined as in Lemma \ref{lem1} and \ref{lem2}. Assume that $[h, K]=2 \kappa K$, where $\kappa>1$.
If $\gamma \ad_K w= \ad_f^2 K$, then $\gamma=2 -4 \kappa$.
\end{Lem}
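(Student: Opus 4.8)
The plan is to exploit that both sides of the hypothesised identity $\gamma\,\ad_K w=\ad_f^2 K$ are $\ad_h$-eigenvectors of the same weight, and to pin down $\gamma$ by applying a weight-raising operator until both sides land in the one-dimensional weight space spanned by $K$. Since $[h,K]=2\kappa K$, $[h,f]=-2f$ and $\ad_h w=-4w$ (the $n=0$ case of Lemma \ref{lem2}), and since weights add under the Lie bracket, the element $\ad_K w$ has weight $2\kappa-4$ and $\ad_f^2 K$ also has weight $2\kappa-4$; this is precisely the consistency that makes the relation possible. The operator $\ad_e$ raises weight by $2$ (because $[h,e]=2e$), so $\ad_e^2$ carries both sides into the weight-$2\kappa$ space. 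I would therefore apply $\ad_e^2$ to the identity and compare the resulting multiples of $K$.

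The computation rests on the derivation property of $\ad_e$ together with the standing hypothesis $\ad_e K=[u_x,K]=0$ of Theorem \ref{thm2} and the relations $\ad_e w=f$ (Lemma \ref{lem2}), $\ad_e f=[e,f]=h$ and $\ad_e h=[e,h]=-2e$. For the left-hand side the Leibniz rule gives $\ad_e[K,w]=[K,f]$ and then $\ad_e[K,f]=[K,h]=-2\kappa K$, so that $\ad_e^2(\gamma\,\ad_K w)=-2\gamma\kappa K$. For the right-hand side I would set $B:=\ad_f K$, which has weight $2\kappa-2$ and satisfies $\ad_e B=[h,K]=2\kappa K$; then $\ad_e\ad_f^2 K=[h,B]+[f,\ad_e B]=(2\kappa-2)B+2\kappa B=(4\kappa-2)B$, and a second application gives $\ad_e^2\ad_f^2 K=(4\kappa-2)\,\ad_e B=2\kappa(4\kappa-2)K$.

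Comparing the two sides yields $-2\gamma\kappa K=2\kappa(4\kappa-2)K$; since $\kappa>1$ the factor $2\kappa$ is nonzero and may be cancelled, giving $\gamma=2-4\kappa$, as claimed. I do not expect any genuine difficulty here: the whole argument is a short weight count followed by two applications of the Leibniz rule for $\ad_e$. The only points requiring care are the consistent tracking of signs through the nested brackets and the explicit use of $\ad_e K=0$, which is exactly the hypothesis $[u_x,K]=0$ inherited from Theorem \ref{thm2}; without it the reduction to a multiple of $K$ would fail.
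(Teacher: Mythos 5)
Your argument is correct and is essentially the paper's own: the paper differentiates the identity with respect to $x$, which on these evolutionary vector fields is exactly $-\ad_e$ (since $\partial_x w=-[u_x,w]$), and compares coefficients of $[f,K]\neq 0$ after a single application, whereas you apply $\ad_e$ twice to land on a multiple of $K$ itself. The computations agree, and your explicit remark that $\ad_e K=[u_x,K]=0$ must be imported from the hypotheses of Theorem \ref{thm2} is a point the paper uses tacitly.
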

\begin{proof}
 Notice that $\frac{\partial w}{\partial x}=-f$ and $\frac{\partial f}{\partial x}=-h$. We differentiate both sides of $\gamma \ad_K w= \ad_f^2 K$ with respect to $x$
 and it follows that
 \begin{eqnarray*}
\gamma \ad_K (-f)=[-h, [f,K]]+[f,[-h,K]]=(-(2 \kappa-2)-2 \kappa)[f, K] ,
 \end{eqnarray*}
which implies that $\gamma=2-4 \kappa$ since $[f, K]\neq 0$.
\end{proof}
Therefore, we have the following result on $t$-dependent symmetries for equation (\ref{eq}):
\begin{The}\label{thm3}
Assume that the homogeneous equation $u_t=K[u]$ satisfies the conditions in Theorem \ref{thm2}. 
Let the evolutionary vector field $w$ satisfy $\ad_e w=f$ and $\ad_K w=\frac{1}{2-4 \kappa} \ad_f^2 K$. Then
$$\frac{\partial^j }{\partial t^j} \exp(-t \ad_K) \left( \ad_f^{n} w \right),
\quad \mbox{where}\ n=0, 1, 2, \cdots, \mbox{and} \ j=0, 1, 2, \cdots, n+2$$
are $t$-dependent symmetries of the equation.
 \end{The}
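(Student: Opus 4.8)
The plan is to reduce the whole statement to the single structural fact that each $\ad_f^n w$ is a $K$-generator of \emph{finite} degree, after which the symmetries and their admissible $t$-derivatives are produced by the mechanism already set up after Definition \ref{def1}: if $\ad_K^{m+1}\phi=0$ then $\exp(-t\ad_K)\phi=\sum_{l=0}^{m}\frac{(-t)^l}{l!}\ad_K^l\phi$ is a symmetry and so is every $\frac{\partial^j}{\partial t^j}$ of it. So the real content is a counting statement for the powers of $\ad_K$ annihilating $\ad_f^n w$, and this is exactly the kind of propagation Lemma \ref{lem60} delivers once its hypotheses are verified for $n=0$.

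First I would pin down the generator degree of $w$ itself. Recall from the proof of Theorem \ref{thm2} that $a_1=K$, so the hypothesis reads $\ad_K w=\frac{1}{2-4\kappa}\ad_f^2 K=\frac{1}{2-4\kappa}\ad_f^2 a_1$ (only $\gamma\neq0$ is used; the precise value $\gamma=\frac{1}{2-4\kappa}$ is the one fixed by the preceding lemma). Now apply Lemma \ref{lem50} with $n=3$, $j=1$, which gives both $\ad_K^{3}\ad_f^{2}a_1=0$ and $\ad_e\ad_K^{2}\ad_f^{2}a_1=0$. Substituting the hypothesis yields $\ad_K^{4}w=\ad_K^{3}(\ad_K w)=\frac{1}{2-4\kappa}\ad_K^{3}\ad_f^{2}a_1=0$ and likewise $\ad_e\ad_K^{3}w=\frac{1}{2-4\kappa}\ad_e\ad_K^{2}\ad_f^{2}a_1=0$. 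Thus $w$ is a $K$-generator of degree $3$, and the pair of vanishing conditions $\ad_K^{4}w=0$, $\ad_e\ad_K^{3}w=0$ is precisely the hypothesis of Lemma \ref{lem60} with $r=4$.

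Next I would invoke Lemma \ref{lem60} with $r=4$ to carry this up the $\ad_f$-tower: for every $n\in\N$ one obtains $\ad_K^{\,n+4}\ad_f^{\,n}w=0$ and $\ad_e\ad_K^{\,n+3}\ad_f^{\,n}w=0$. The first identity says exactly that $\ad_f^{\,n}w$ is a $K$-generator of degree at most $n+3$. Feeding $\phi=\ad_f^{\,n}w$ into the $K$-generator construction of Section \ref{sec3} then shows that $\exp(-t\ad_K)\ad_f^{\,n}w$ is a symmetry of $u_t=K$ and that all of its $t$-derivatives $\frac{\partial^j}{\partial t^j}\exp(-t\ad_K)\ad_f^{\,n}w$ are symmetries as well.

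It remains to fix the range of $j$. The symmetry $\exp(-t\ad_K)\ad_f^{\,n}w$ is a polynomial in $t$ of degree exactly $n+3$, its leading coefficient being the nonzero element $\ad_K^{\,n+3}\ad_f^{\,n}w$ (the degree of the generator is exactly $n+3$, inherited from $\ad_K^{3}w\neq0$). Differentiating $j$ times lowers the degree to $n+3-j$, which stays genuinely $t$-dependent precisely for $j\le n+2$; the top derivative $j=n+3$ would instead return the $t$-independent symmetry proportional to $\ad_K^{\,n+3}\ad_f^{\,n}w$, which is why it is omitted. The only substantive step is the first one: observing that the single scalar relation $\ad_K w=\frac{1}{2-4\kappa}\ad_f^2 K$ collapses, through Lemma \ref{lem50}, into the two vanishing conditions that launch Lemma \ref{lem60}. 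Everything afterwards is a mechanical combination of Lemma \ref{lem60} with the general $K$-generator machinery, so the genuine work has already been absorbed into the inductive argument of Lemma \ref{lem60}.
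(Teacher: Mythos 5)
Your proposal is correct and follows essentially the same route as the paper: establish that $w$ is a $K$-generator of degree $3$ from the hypothesis $\ad_K w=\frac{1}{2-4\kappa}\ad_f^2 K$, propagate this up the $\ad_f$-tower via Lemma \ref{lem60} to get $\ad_K^{n+4}\ad_f^n w=0$, and then invoke the $\exp(-t\ad_K)$ machinery of Section \ref{sec3} together with the fact that $t$-derivatives of symmetries are symmetries. The only difference is that you explicitly verify the second hypothesis $\ad_e\ad_K^{r-1}w=0$ of Lemma \ref{lem60} by pulling it out of Lemma \ref{lem50} with $n=3$, $j=1$, a step the paper leaves implicit; this is a welcome extra precision rather than a deviation.
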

\begin{proof}
Under the assumption we have shown that $w$ is a $K$-generator of degree $3$. From  Lemma \ref{lem2} we know that $\ad_f^j w\neq 0$ for all $j\in \N$
and we have $\ad_K^{n+4} \ad_f^n w=0$ for all $n\in \N$ following from Lemma \ref{lem60}. 
Therefore, symmetry $\exp(-t \ad_K) \left( \ad_f^{n} w \right)$ is a polynomial in $t$ of degree $n+3$.
Since the given equation does not depend $x$ and $t$ explicitly, the $t$ partial derivatives of a symmetry is still a symmetry.
\end{proof}
We remark for some equations such as the Burgers equation studied next section, this theorem does not give us all time dependent symmetries.

\section{Examples and applications}\label{sec5}
In this section, we'll demonstrate the approach in Section \ref{sec4} construct a master symmetries for both multi-component
and multi-dimensional evolutionary equations. We apply Theorem \ref{thm2} to
some
new equations. We also include
some well-known examples, e.g., the Burgers equation in order to compare our method to the existing methods. 
\subsection{$(1+1)$-dimensional differential equations}
In this section, we use the approach described in Section \ref{sec4} to construct known results on master symmetries for a few typical examples
such as the Burgers equation (one component homogeneous equation), the Landau-Lifshitz equation (two-component non-homogeneous equation) and Benjamin-Ono equation
(integro-differential equation). Meanwhile we also compare our
approach to the existing methods.
\subsubsection{The Burgers equation}\label{sec41}
Consider the Burgers equation 
\begin{eqnarray}\label{bug}
u_t=K=\uu{xx}+u \uu{x} , 
\end{eqnarray}
which is homogeneous for $\alpha=1$ and we have
\begin{eqnarray*}
 [x u_x+u, K]=2 K.
\end{eqnarray*}
According to Theorem \ref{thm2}, its master symmetry is
$$\tau=\frac{1}{4} [x^2 u_x+ 2 x u, K]=x(u_{xx}+u u_x)+\frac{3}{2}u_x +\frac{1}{2} u^2$$
since 
\begin{eqnarray*}
 a_2=[\tau, K]=2 u_{xxx}+3u u_{xx}+3 u_x^2+\frac{3}{2} u^2 u_{x}
\end{eqnarray*}
and $[[\tau, K], K]=0$, and condition (ii) is trivially satisfied following from Remark \ref{rem3}.

We now discuss time dependent symmetries for equation (\ref{bug}). Let $w=\frac{1}{3}x^3 u_x+x^2 u$. Notice that
\begin{eqnarray*}
&&\ad_K w=-2 (x^2 u u_x+x^2 u_{xx}+3 x u_x +x u^2 +u)=-\frac{1}{6} \ad_f^2 K.
\end{eqnarray*}
According to Theorem \ref{thm3}, we can write down the time dependent symmetries for equation (\ref{bug}). For example,
\begin{eqnarray*}
 &&S=\exp(-t \ad_K)  w =\frac{4}{3} t^3 \left(2 u_{4x}+4 u u_{3x}+10 u_x u_{xx}+3 u^2 u_{xx}+6 u u_x^2+u^3 u_x \right)\\
 &&\qquad +t^2 \left(x(4 u_{3x}+6 u u_{xx}+6 u_x^2+3 u^2 u_x)+8 u_{xx}+10 u u_x+u^3 \right)\\
 &&\qquad+2 t \left(x^2(u_{xx}+u u_x)+x(3 u_x+u^2)+u \right)+\frac{1}{3}x^3 u_x+x^2 u
\end{eqnarray*}
and $\frac{\partial^j }{\partial t^j} S$ for $j=0,1,2$ are time dependent
symmetries. Comparing to the list of $t$-dependent symmetries $S_n$,
$n=1, 2, 3, 4$, of the Burgers equation given in Introduction, we have
$$
 \frac{\partial S_4}{\partial t}=\frac{3}{2} S+3 S_2.
$$
Following from  Lemma \ref{lem50}, both $\ad_{f}a_2$ and $\ad_{f}^2 a_1$ are $K$-generators. However, they do not generate any new symmetries. Indeed, we have
\begin{eqnarray*}
&&\exp(-t \ad_K) \left( \ad_{f}a_2\right)=-\frac{3}{2} \frac{\partial^2 S}{\partial t^2}\quad \mbox{and} \quad 
\exp(-t \ad_K) \left( \ad_{f}^2 a_1 \right)=6 \frac{\partial S}{\partial t}.
\end{eqnarray*}

The master symmetries and time dependent symmetries for the Burgers equation are known in the literature (see, for example, \cite{mr86c:58158, mr2002b:37100}).
We now discuss how to explain other approaches under the frame of the
$\liesl(2)$
representations.

In paper \cite{mr1874291}, the authors constructed two different \(\liesl(2, \C)\) around the scaling \(h=x\uu{1}+u\)
with generators $e_i, h_i$ and $f_i$, $i=1,2$ as follows:
\begin{eqnarray*}
&&\liesl_h(2, \C): e_1=K=\uu{xx}+u \uu{x}, \ f_1=-x,\ h_1=\frac{h}{2}= x u_x+u;\label{sl21}\\
&&\liesl_v(2, \C): e_2=1, \ f_2=-2\tau=-2x (u_{xx}+u u_x)-3 u_x-u^2,\ h_2=-h=-2(x u_x+u).\label{sl22}
\end{eqnarray*}
Here we can take any $f_2+\beta u_x$, where $\beta$ is constant, as $f_2$. To be
consistent with $\tau$ defined previously, we choose the one in $\liesl_v(2,
\C)$.
Together with $\liesl_d(2)$ given in Lemma \ref{lem1}, we present them in
Diagram 2. 

\begin{figure}[hz]
\captionsetup{name=Diagram}
\centering
\begin{diagram}
\liesl_d(2)  &        &        &    \liesl_v(2)        &          &            &             &        &           &\\
& a_0        &         & e_2          &          &            &             &        &           &\\
& \dTo   & \rdTo^f &   \dTo^{\tau}        &          &            &             &        &           &\\
 \liesl_h(2)\ & a_1        &  \rTo^{f_1}       &  h        &  \rTo^{f_1}         &  f_1          &             &         &           &\\
&\dTo_{\tau} & \rdTo^f &   \dTo^{\tau}        & \rdTo^f  &            &             &         &           &\\
& a_2         &         & \tau      &          & f          &             &         &            &
\end{diagram}
\caption{Three $\liesl(2, \C)$ for the Burgers equation}\label{dia2}
\end{figure}

There three $\liesl(2)$ are the foundation of different approaches to
constructing master symmetries.
Dorfman's $\tau$-scheme \cite{mr94j:58081}
is based on $\liesl_v(2)$. Since $[h_2, a_0]=-2 a_0$ and $[e_2, a_0]=0$, all symmetries $a_j$ are the basis for Verma module $M(-2)$ of  $\liesl_v(2)$.
In the $\tau$-scheme, the element $\tau$ is already known. So it can only be used to prove that the given $\tau$ is indeed a master symmetry.

Since $[a_n, K]=0$ and $[h_1,a_n]=(n+1) a_n$,  the space spanned by 
$$\{\ad_{f_1}^i a_n,\quad  i=0, 1, \cdots, n+1\}$$
is an $(n+2)$--dimensional irreducible representation of $\liesl_h(2)$. This is the theory behind Fuchssteiner's approach in \cite{mr86c:58158}.
According to him, a master symmetry for equation $u_t=K[u]$ is $[a_2, f_1]$. Indeed, for the Burgers equation, 
\begin{eqnarray*}
 [a_2, f_1]=3x(u_{xx}+u u_x)+6u_x +\frac{3}{2} u^2=3 \tau+\frac{3}{2} u_x
\end{eqnarray*}
is a master symmetries since $[u_x, a_n]=0$.

The method proposed in this paper is based on $\liesl_d(2)$.  In fact, this method was first used in \cite{mr1874291} for Ibragimov-Shabat equation
\begin{eqnarray}\label{IS}
 u_t=u_{xxx}+3 u^2 u_{xx}+9 u u_x^2+3 u^4 u_x .
\end{eqnarray}
and the Kadomtsev--Petviashvili equation (\ref{kp}) (we revisit it in section \ref{seckp}). Later in \cite{ffokas02}, 
Finkel and Fokas adapted it (without specifying $\liesl(2)$) to deal with
nonlocal master symmetries
such as the Sawada-Kotera equation. 
The advantage of using $\liesl_d(2)$ instead of other $\liesl(2)$ is that it is a free structure for homogeneous evolutionary equations. 
There are no need to search for $e_2$ and $f_1$. Indeed, for some integrable equations, they do not exist, for instance, the Ibragimov-Shabat equation.
Since we do not reach out for $f_1$, consequently, we do not find all $t$-dependent symmetries.

Finally, we comment that using the quotient space $P(-4)/M(-4)$,
see (\ref{quoti}) instead of only $\liesl_d(2)$ enables us to find more time
dependent symmetries (still may not be all).
This leads to some $t$-dependent symmetries
for equation (\ref{IS}) are missing in \cite{mr1874291}, which was addressed in
\cite{mr1952867}.
\subsubsection{The Landau-Lifshitz equation}
Consider the Landau-Lifshitz equation
\begin{eqnarray}\label{lleq}
 \s_t=K=\s\wedge \s_{xx}+\s\wedge J\s ,
\end{eqnarray}
where $\s$ is a vector function of $x$ and $t$ in $\R^3$ with $|\s|=1$, $\wedge$ denotes the vector product
and $J$ is $3\times 3$  constant diagonal matrix. It is not homogeneous for only dependent variable $\s$. However,  if we add equation 
$J_t=0$, the system is homogeneous satisfying
\begin{eqnarray*}
 \big{[}\left( \begin{array}{c} x  \s_x +\alpha \s\\xJ_x+2J \end{array}\right), \left( \begin{array}{c}K\\0 \end{array}\right)\big{]}= (2+\alpha)
 \left( \begin{array}{c}K\\0 \end{array}\right)
\end{eqnarray*}
Following from Theorem \ref{thm2}, we define $\tau$ as the first component (the second component can be dropped since it is always zero) of 
\begin{eqnarray*}
\frac{1}{2(2+\alpha)} \big{[}\left( \begin{array}{c} x^2 \s_x+2x \alpha \s\\x^2 J_x+4 xJ \end{array}\right), \left( \begin{array}{c}K\\0 \end{array}\right)\big{]}
=\left( \begin{array}{c} x (\s\wedge \s_{xx}+\s\wedge J\s)+\frac{1+2 \alpha}{2+\alpha}\s\wedge \s_x\\0 \end{array}\right).
\end{eqnarray*}
We now check Condition (i) in Theorem \ref{thm2}. First we compute
\begin{eqnarray*}
 &&a_2=[\tau, K]=(2+\frac{1+2 \alpha}{2+\alpha})) \s\wedge (\s_x\wedge \s_{xx})+2 \s\wedge (\s\wedge\s_{xxx})+2\s\wedge (\s_x\wedge J\s) \\
&&\qquad +(2-\frac{1+2 \alpha}{2+\alpha})\s\wedge (\s\wedge J\s_x)
 +\frac{1+2 \alpha}{2+\alpha} \s \wedge J(\s\wedge \s_x)
\end{eqnarray*}
using $|\s|=1$ and $\s \cdot \s_x=0$. Direct computation shows that it commutes with $K$ when $\alpha=1$.  
Condition (ii) was explicitly proved in \cite{Fuch84}. 
Thus we have the master symmetry
\begin{eqnarray}\label{lltau}
 \tau=x (\s\wedge \s_{xx}+\s\wedge J\s)+\s\wedge \s_x,
\end{eqnarray}
which is the same as the one given in \cite{Fuch84}. It generates the
hierarchy of symmetries of every order starting from $a_0=\s_x$.  

These symmetries can also be produced by recursion operators.
For the Landau-Lifshitz equation, there exists two weakly nonlocal \cite{MaN01}
recursion operators of orders $2$ and $3$  and they are related by an elliptic
curve equation \cite{DS08}.  We need to use both of them to generate all these
symmetries obtained from the
master symmetry (\ref{lltau}).

We now compute the time dependent symmetries for equation (\ref{lleq}). First
we have
\begin{eqnarray}\label{conw}
\ad_{\bar K}w=-\frac{1}{10} \ad_f^2 {\bar K},
\end{eqnarray}
where
\begin{eqnarray*}
{\bar K}= \left( \begin{array}{c}K\\0
\end{array}\right), \quad w= \left( \begin{array}{c}
\frac{1}{3} x^3 \s_x+x^2 \s\\\frac{1}{3}x^3 J_x+2 x^2J \end{array}\right),
\quad f=-\left( \begin{array}{c} x^2  \s_x +2 x \s\\x^2 J_x+4 x J
\end{array}\right).
\end{eqnarray*}
Notice that the second component of $ \ad_{\bar K} w $ is zero. So are all
$\ad_{\bar K} \ad_f^n w $. We denote the first components of them by $w_n $ and
$\ad_K^{n+3} w_n=0$. According to Theorem \ref{thm3}, the time dependent
symmetries for equation (\ref{lleq}) are 
\begin{eqnarray*}
\frac{\partial^j }{\partial t^j} \exp(-t \ad_K) w_n,
\quad \mbox{where}\ n=0, 1, 2, \cdots, \mbox{and} \ j=0, 1, 2, \cdots, n+1.
\end{eqnarray*}

\subsubsection{A Benjamin-Ono type equation}\label{sec43}
In \cite{mn2}, Mikhailov and Novikov classified the Benjamin-Ono-type equations
with higher symmetries using
the perturbative symmetry approach in symbolic representation. Here we construct a master symmetry for the following equation in their list:
\begin{equation}\label{BOc}
u_t=K=H(u_{xx})+D_x \left(\frac{1}{2}c_1 u^2+\frac{1}{2} c_2 H(u^2)-c_2 u H(u)
\right),
\end{equation}
where $c_1$ and $c_2$ are constant and $H$ denotes the Hilbert transform
\begin{equation*}
\label{hilbert} H(f)=\frac{1}{{\rm i}\pi}\int_{-\infty}^{\infty}\frac{f(y)}{y-x}dy, \qquad {\rm i}^2=-1.
\end{equation*}
Notice that when $c_1=2$ and $c_2=0$, it reduces to the well-known Benjamin-Ono
equation:
\begin{equation}\label{BO}
u_t=K=H(u_{xx})+2uu_x.
\end{equation}
The higher order symmetries and conservation laws of the Benjamin-Ono
type equations contain nested Hilbert transform and thus  an appropriate
extension of the differential ring is required. We refer \cite{mn2} for the details of the extension.
It is very similar to the one proposed by Mikhailov and Yamilov in \cite{mr1643816}
for $(2+1)$-dimensional equations, which we will describe in section
\ref{sec52}.

We now apply Theorem \ref{thm2} to construct a master symmetry of (\ref{BOc}). First we notice that
it is homogeneous and we have
\begin{eqnarray*}
 [x u_x+u, K]=2 K.
\end{eqnarray*}
In \cite{mn2}, it was proved that the symbolic representations of its symmetries start with linear term in $u$.
According to Remark \ref{rem1}, condition (ii) in Theorem \ref{thm2} is satisfied. Let
\begin{eqnarray}\label{tauBO}
 \tau=\frac{1}{4}[x^2 u_x+2 x u, K]=x K+\frac{3}{2} H(u_x)+\frac{1}{2}c_1
u^2+\frac{1}{2} c_2 H(u^2)-c_2 u H(u).
\end{eqnarray}
We have
\begin{eqnarray*}
&& a_2=[\tau, K]=2 u_{xxx}+\frac{c_1}{2}D_x\left(3  H(u u_x)+3  u H(u_x)+c_1 u^3 \right)\\
&&\qquad +\frac{c_2}{2}D_x\left(-3  H(u_x H(u))-3  H(u) \ H(u_x)-3 c_1  u^2 H(u)+c_1  H(u^3)\right)\\
&&\qquad +\frac{c_2^2}{2}D_x\left( u^3-3  H(u^2 H(u))+3  u (H(u))^2\right)
\end{eqnarray*}
and $[a_2, K]=0$. Here we take into account the relation $H^2 = 1$ and the Hilbert-Leibnitz rule
$$H(fg) = fH(g) + gH(f)- H(H(f)H(g)).$$
Thus $\tau$ given by (\ref{tauBO}) is a master symmetry. 
When $c_1=2$ and $c_2=0$, we get the master symmetry of (\ref{BO}) in \cite{mr83f:58039}.

As we discussed in Section \ref{sec3}, we can use the master symmetry to construct conserved densities. Equation (\ref{BOc}) is Hamiltonian
with a Hamiltonian operator ${\cal H}=D_x$. Indeed,
\begin{eqnarray*}
 u_t=D_x \frac{\delta \rho_1}{\delta u}=D_x \left(H(u_x)+\frac{1}{2}c_1
u^2+\frac{1}{2} c_2 H(u^2)-c_2 u H(u) \right), 
\end{eqnarray*}
where
\begin{eqnarray*}
 \rho_1=\frac{1}{2} \left(u
H(u_x)+\frac{1}{3} c_1 u^3+c_2 u H(u^2)\right). 
\end{eqnarray*}
Moreover, we have
$$
 \tau=D_x\left(x\frac{\delta \rho_1}{\delta u}+\frac{1}{2} H(u)\right)=D_x
\frac{\delta (x \rho_1)}{\delta u} .
$$
Let $\rho_0=\frac{1}{2}u^2$, which corresponds to $a_0=u_x=D_x \frac{\delta
\rho_0}{\delta u}$. It follows from Proposition \ref{pro3} that the symmetries
$a_n=[\tau,\ a_{n-1}]$ are Hamiltonian vector fields and 
$a_n=D_x \frac{\delta \rho_n}{\delta u}$, where $\rho_n\equiv
{\rho_{n-1}}_*(\tau)$.  We can check that
$
\rho_1\equiv{\rho_0}_*(\tau)$ and find
\begin{eqnarray*}
&&\rho_2\equiv{\rho_1}_*(\tau)\equiv \frac{\delta ( \rho_1)}{\delta
u} D_x\left(\frac{1}{2} H(u)+x\frac{\delta ( \rho_1)}{\delta u}
\right)\equiv \frac{1}{2}\frac{\delta ( \rho_1)}{\delta
u}\left(H(u_x)+\frac{\delta ( \rho_1)}{\delta u} \right)\\&&\quad
\equiv-u_x^2+\frac{c_1}{4} \left(3 u^2 H(u_x)+\frac{c_1}{2} u^4\right)
+\frac{c_2}{2}\left(-3 u H(u) \ H(u_x)- c_1  u^3 H(u)\right)\\&&\qquad 
+\frac{c_2^2}{4}\left(3  u^2 (H(u))^2+\frac{u^4}{2}\right) .
\end{eqnarray*}
When $c_1=2$ and $c_2=0$, we get the known conservation laws for the Benjamin-Ono
equation (\ref{BO}) \cite{mr83f:58039}.

\subsection{$(2+1)$-dimensional partial differential equations}\label{sec52}
The method of constructing master symmetries proposed in Section \ref{sec4} is
valid for $(2+1)$-dimensional partial differential equations. A typical example
is the Kadomtsev-Petviashvili (KP) equation
\begin{eqnarray}\label{kp}
u_t=u_{xxx}+6 u u_x +3 D_x^{-1} D_y u_{y},
\end{eqnarray}
where the dependent variable $u$ is a smooth
function of independent variables $x, y$ and $t$, and  $D_x^{-1}$ is the formal
inverse of the total $x$-derivative.
The main obstacle to directly extend the theories of
$(1+1)$-dimensional nonlinear evolutionary equations 
to $(2+1)$-dimensional equations is the nonlocality. The concept of conservation laws 
and the symmetry approach for testing integrability \cite{mr93b:58070} are mainly based on the locality.
In 1998,  Mikhailov and Yamilov addressed the nonlocality
problem in \cite{mr1643816}.
They noticed that the operators $D_x^{-1}$ and $D_y^{-1}$ never appear alone
but always in pairs like $D_x^{-1}D_y$ and $D_y^{-1}D_x$ for
all known integrable equations and their hierarchies of symmetries
and conservation laws. 
Based on this observation, they introduced the concept of quasi-local functions,
which is a natural generalization of local functions. 
Using the symbolic representation, it was proved this 
observation is true for integrable equations obtained from certain scalar Lax operators \cite{wang21}.

We denote the derivatives of dependent variable $u$ with respect to its
independent variables $x$ and $y$ by $u_{ix,jy}=\partial_x^i
\partial_y^j u$. For smaller $i$ and $j$, we sometimes write the indexes out
explicitly,
that is, we write $u_{xxy}$ and $u$ instead of $u_{2x,1y}$ and $u_{0x,0y}$. All smooth functions depending on $x, y, t, u$ and derivatives of $u$
form a differential ring $\A$ with total $x$-derivation and $y$-derivation
$$
D_x=\sum_{i=0}^{+\infty}\sum_{j=0}^{+\infty}u_{(i+1)x,jy}\frac{\partial}
{\partial u_{ix,jy}}
\quad \mbox{and}\quad
D_y=\sum_{i=0}^{+\infty}\sum_{j=0}^{+\infty}u_{ix,(j+1)y}\frac{\partial}
{\partial u_{ix,jy}}.
$$
Let us denote
\begin{eqnarray}\label{Theta}
\Theta=D_x^{-1}D_y, \qquad \Theta^{-1}=D_y^{-1}D_x.
\end{eqnarray}
To define  quasi-local functions $\A(\Theta)$, we consider a sequence of extensions of $\A$ as follows:

Let $\Theta \A=\{\Theta f:\
f \in \A\}$ and $\Theta^{-1} \A=
\{\Theta^{-1} f:\ f \in \A\}$. We define
$\A_0(\Theta )=\A$
and $\A_k (\Theta)$ is the ring closure of the union
$$\A_{k-1}(\Theta )\cup \Theta \A_{k-1}(\Theta )
\cup \Theta^{-1} \A_{k-1}(\Theta ).$$
Here the index $k$ indicates the maximal depth
of nesting the operator $\Theta^{\pm1}$ in the expression.
Clearly, we have $\A_{k-1}(\Theta )\subset \A_k(\Theta )$.
We now define $\A(\Theta)=\lim_{k\to \infty}\A_k (\Theta)$.
However, for a given $f\in \A(\Theta)$, there exists $k$
such that $f\in \A_k(\Theta)$. Note that $\A(\Theta)$ is not invariant under
transformations of variables.

Since the dependent variable depends on two spatial variables $x$ and $y$, we consider the scaling
symmetry as $x u_x +\beta y u_y +\alpha u$, where $\alpha$ and $\beta$ are constant. As a convention, we take the weight of $x$-derivative to be $1$.
Around $h$ we
can build two $\liesl(2)$ representations by choosing either $e=u_x$ or $e=u_y$. We present them in following two lemmas,
which is similar to Lemma \ref{lem1} and Lemma \ref{lem2} for the
$(1+1)$-dimensional case.
\begin{Lem}\label{lem5}
Given a scaling $h=2 (x u_x +\beta y u_y+\alpha u)$, where $\alpha$ and $\beta$ are constant, 
the elements $e=u_x$ and $f=-(x^2 u_x+2 \beta x  y u_y+ 2 \alpha x u)$ form an $\liesl(2, \C)$ with $h$.
Moreover, there exists  $w=\frac{1}{3} x^3 u_x+ \beta x^2  y u_y+ \alpha x^2 u$. We have $\ad_e w=f$ and 
\begin{eqnarray}\label{adfn1}
\ad_f^n w=\frac{n!}{3} \left(x^{n+3} u_x +(n+3) x^{n+2} (\beta y u_y+\alpha u)\right), \quad n=0, 1, 2,\cdots 
\end{eqnarray}
\end{Lem}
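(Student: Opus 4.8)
This is the $(2+1)$-dimensional counterpart of Lemmas \ref{lem1} and \ref{lem2}, so the plan is to follow the same template, the only genuinely new feature being that the Fr\'echet derivative in (\ref{liebra}) and the total derivatives now range over the mixed jet variables $u_{ix,jy}$ and involve both $D_x$ and $D_y$. The tool I would set up first is the identity $\ad_{u_x} w=-\partial_x w$, valid for every $w\in\lieh$, which is already recorded for the $(1+1)$-case in the remark preceding Lemma \ref{lem2}. It extends unchanged: the Fr\'echet derivative of $e=u_x$ is $D_x$, so $[u_x,w]=w_\star(u_x)-D_x w$, and the jet-variable contributions of $w_\star(u_x)$ cancel exactly those of $D_x w$, leaving only the explicit derivative $-\partial_x w$.

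With this in hand the relation $\ad_e w=f$ is immediate, since $\ad_e w=-\partial_x w=-\partial_x(\frac{1}{3} x^3 u_x+\beta x^2 y u_y+\alpha x^2 u)=-(x^2 u_x+2\beta x y u_y+2\alpha x u)=f$. The same identity also disposes of the first bracket: $[e,f]=\ad_e f=-\partial_x f=2x u_x+2\beta y u_y+2\alpha u=h$. For the two scaling relations I would read off the Fr\'echet derivatives $h_\star=2x D_x+2\beta y D_y+2\alpha$ and $f_\star=-(x^2 D_x+2\beta x y D_y+2\alpha x)$ and evaluate $[h,e]=e_\star(h)-h_\star(e)$ and $[h,f]=f_\star(h)-h_\star(f)$ from (\ref{liebra}). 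In each case the second-order jet terms cancel and one is left with $2e$ and $-2f$ respectively, confirming the relations (\ref{sl2}).

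For formula (\ref{adfn1}) I would argue by induction on $n$, exactly as in the proof of Lemma \ref{lem2}. The base case $n=0$ is the given expression for $w$ itself, and the inductive step evaluates $\ad_f^{n+1} w=[f,\ad_f^n w]=(\ad_f^n w)_\star(f)-f_\star(\ad_f^n w)$, using $f_\star$ above together with $(\ad_f^n w)_\star=\frac{n!}{3}(x^{n+3}D_x+(n+3)\beta x^{n+2} y D_y+(n+3)\alpha x^{n+2})$. Collecting the resulting terms by powers of $x$ should reproduce the coefficient jump $n!\,(n+3)\mapsto(n+1)!\,(n+4)$ and raise every exponent by one.

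The genuine work, and the only place that departs from the $(1+1)$-dimensional proof, is the bookkeeping in this last step: $D_y$ now acts nontrivially, hitting both the explicit factor $y$ and the variable $u_y$, so several extra cross-terms, all carrying a factor $\beta$, appear that were absent in Lemma \ref{lem2}. The cleanest way I would keep these under control is to observe that every vector in play---$e$, $h$, $f$, $w$ and each $\ad_f^n w$---is a point-symmetry characteristic of the form $\xi(x,y)u_x+\eta(x,y)u_y+\phi(x,y)u$, so that the evolutionary bracket (\ref{liebra}) reduces to the Lie bracket of the associated vector fields $-\xi\partial_x-\eta\partial_y+\phi u\,\partial_u$ on $(x,y,u)$-space. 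In that picture the base part $-\xi\partial_x-\eta\partial_y$ closes on its own, independently of the coefficient $\phi$, and $\phi$ is slaved to the base flow; this makes the $\beta$-channel (the $u_y$ terms) and the $\alpha$-channel (the $u$ terms) evolve in parallel and collapses the verification back to the scalar recursion already carried out for Lemma \ref{lem2}. I expect no conceptual obstacle beyond this organised term-tracking.
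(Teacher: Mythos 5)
Your proposal is correct and follows essentially the same route as the paper, which simply states that the proof is identical to those of Lemma \ref{lem1} and Lemma \ref{lem2}: verify the relations (\ref{sl2}) directly via $\ad_{u_x}w=-\partial_x w$ and the Fr\'echet-derivative bracket (\ref{liebra}), then establish (\ref{adfn1}) by the same induction, with the extra $\beta$-terms from $D_y$ tracked alongside the $\alpha$-terms. Your closing observation that all elements are point-symmetry characteristics $\xi(x,y)u_x+\eta(x,y)u_y+\phi(x,y)u$, so the bracket reduces to that of first-order vector fields on $(x,y,u)$-space, is a sound and tidy bookkeeping device but not a departure from the paper's argument.
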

The proof of this lemma is the same as we did for Lemma \ref{lem1} and Lemma \ref{lem2}. We won't repeat it again.

Notice that the role of $x$ and $y$ is equal. We can alternatively build up the $\liesl(2, \C)$ by taking $e=u_y$ instead.
\begin{Lem}\label{lem6} Assume that $\beta\neq 0$.
Given a scaling $h=\frac{2}{\beta} (x u_x +\beta y u_y+\alpha u)$, where $\alpha$ and $\beta$ are constant, when 
the elements $e=u_y$ and $f=-\frac{1}{\beta}(2 x y u_x+\beta  y^2 u_y+ 2 \alpha y u)$ form an $\liesl(2, \C)$ with $h$.
Moreover, there exists  $w=\frac{1}{3} y^3 u_y+ \frac{1}{\beta} y^2 ( x u_x+ \alpha u)$. We have $\ad_e w=f$ and 
\begin{eqnarray}\label{adfn2}
\ad_f^n w=\frac{n!}{3} \left(y^{n+3} u_y +\frac{n+3}{\beta} y^{n+2} ( x u_x+\alpha u)\right), \quad n=0, 1, 2,\cdots 
\end{eqnarray}
\end{Lem}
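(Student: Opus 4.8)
The plan is to verify this lemma by direct computation followed by an induction, following verbatim the template set up for the $(1+1)$-dimensional case in Lemmas \ref{lem1} and \ref{lem2} (and already mirrored in Lemma \ref{lem5}); the only additional bookkeeping comes from the second spatial variable $y$ and the factors of $\beta$. First I would confirm the three defining relations (\ref{sl2}) for the triple with $e=u_y$, $h=\frac{2}{\beta}(xu_x+\beta y u_y+\alpha u)$ and $f=-\frac{1}{\beta}(2xyu_x+\beta y^2 u_y+2\alpha yu)$. Using the bracket (\ref{liebra}) together with the two total derivations $D_x$ and $D_y$, each of $[e,f]$, $[h,e]$ and $[h,f]$ reduces to a short explicit calculation: the scaling factor $\frac{2}{\beta}$ in $h$ and the $\frac{1}{\beta}$ in $f$ are arranged precisely so that $[h,e]=2e$ and $[h,f]=-2f$ come out correctly normalised, and $[e,f]=h$ follows once the $u_x$, $u_y$ and $u$ coefficients are collected.

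Next I would establish $\ad_e w=f$. Since $e=u_y$ has Fr\'echet derivative $e_\star=D_y$, the same cancellation as in the $x$-direction gives $\ad_e w=[u_y,w]=w_\star(u_y)-D_y(w)=-\frac{\partial w}{\partial y}$, where $\frac{\partial}{\partial y}$ denotes the \emph{explicit} partial derivative treating $u$ and its derivatives as independent coordinates. Differentiating $w=\frac{1}{3}y^3u_y+\frac{1}{\beta}y^2(xu_x+\alpha u)$ in this sense and negating reproduces $f$ immediately.

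Finally I would prove (\ref{adfn2}) by induction on $n$, exactly as (\ref{adfn}) was proved in Lemma \ref{lem2}. The base case $n=0$ is simply the definition of $w$, since the right-hand side of (\ref{adfn2}) at $n=0$ equals $\frac{1}{3}y^3u_y+\frac{1}{\beta}y^2(xu_x+\alpha u)$. For the inductive step I would compute $\ad_f^{n+1}w=[f,\ad_f^n w]$ by expanding both $f_\star=-\frac{1}{\beta}(2xyD_x+\beta y^2 D_y+2\alpha y)$ and $(\ad_f^n w)_\star$, and then collecting the $u_x$, $u_y$ and $u$ terms; the factorial passes from $n!$ to $(n+1)!$ and the power shifts $y^{n+3}\to y^{n+4}$ (with the coefficient $(n+3)\to(n+4)$ on the $y^{n+2}(xu_x+\alpha u)$ part), exactly as in the one-dimensional computation.

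I do not expect any genuine obstacle, as the author himself notes that the proof is identical to the earlier lemmas. The only point demanding care is that $f$ now acts in \emph{both} spatial directions, so in the inductive step both $D_x$ and $D_y$ contribute to $f_\star$ and to the Fr\'echet derivative of $\ad_f^n w$; keeping the $\beta$-weighted $u_y$ term aligned with the $u_x$ and $u$ terms is the one place where a stray sign or factor of $\beta$ could slip, but no conceptual difficulty arises.
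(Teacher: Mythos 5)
Your proposal is correct and follows exactly the route the paper intends: the paper itself omits the proof of this lemma, remarking only that the roles of $x$ and $y$ are interchangeable and that the argument is the same as for Lemmas \ref{lem1} and \ref{lem2}, namely direct verification of the relations (\ref{sl2}), the identity $\ad_e w=-\frac{\partial w}{\partial y}=f$ (the $y$-analogue of $\ad_{u_x}w=-\frac{\partial w}{\partial x}$), and induction on $n$ for (\ref{adfn2}). Your bookkeeping of the $\beta$ factors and of the contribution of both $D_x$ and $D_y$ to $f_\star$ is exactly what the verification requires, so there is nothing to add.
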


The nonlocality can cause the invalidity of the Jacobi identity for the
characteristics
of nonlocal vector fields. This was first noticed in \cite{mr1874291} when the authors systematically investigate the symmetry
properties of the KP equation. To solve this problem,
the notion of a ghost characteristic was introduced in \cite{mr1900193,olvgho}.
The ghost characteristics are the expressions in the kernels of $D_x$ and/or of $D_y$.
One of advantages of $\liesl(2)$ representations presented in Lemma \ref{lem5} and \ref{lem6} is that  there are no such terms and thus the ghost 
problems will not appear.

\subsubsection{The Kadomtsev--Petviashvili equation}\label{seckp}
We use our notation and rewrite the KP equation (\ref{kp}) as follows:
\begin{eqnarray}\label{kpth}
u_t=K=u_{xxx}+6 u u_x +3 \exnl u_{y}. 
\end{eqnarray}
If ignoring the dependence of $y$, the equation reduces to the well-known
Korteweg-de Vries (KdV) equation. As a natural generalisation of
the KdV equation, its symmetry structure has been well studied in the literature. For example, its infinitely many $t$ dependent symmetries
are given in \cite{mr85m:58187, mr88g:58195}. Based on this,  we presented its three different $\liesl(2)$ representations in \cite{mr1874291},
similar to the ones we discussed for the Burgers equation in section \ref{sec41}. Here we revisit this equation
and construct its master symmetries using our approach using both $\liesl(2)$ representations described in Lemma \ref{lem5} and \ref{lem6}.

The KP equation is homogeneous for scaling $h=2 y u_y+x u_x+2 u$ and
\begin{eqnarray*}
 [h, K]=3 K.
\end{eqnarray*}
It follows from Lemma \ref{lem5} and \ref{lem6} that there are two $\liesl(2, \C)$ representations around $h$. 
When we use the $\liesl(2, \C)$ in Lemma \ref{lem6}. 
According to Theorem \ref{thm2}, we define  
\begin{eqnarray*}
\tau=\frac{1}{6} [x y u_x+  y^2 u_y+ 2 y u,\  K]
=\frac{1}{2} y (u_{xxx}+6 u u_x +3 \exnl u_{y})+x u_y +2 \exnl u,
\end{eqnarray*}
which leads to the master symmetry for the KP equation in \cite{oef82}.
Indeed, the next symmetry, denoted by $P$, is
\begin{eqnarray}\label{kpsym2}
 P=[\tau, K]=6 (u_{xxy}+\Theta^2 u_y+4 u u_y +2 u_x \Theta u) .
\end{eqnarray}

If we use the $\liesl(2, \C)$ in Lemma \ref{lem5}, we have $\ad_K^3 f=\ad_K^3 (x^2 u_x+4 x  y u_y+ 4 x u)\neq 0 $. This implies that
$[f, K]$
is not a master symmetry. As we mentioned in Remark \ref{rem1}, $f'=f+f_0$ also forms $\liesl(2)$ with elements $e$ and $h$ when $[e,f_0]=0$ and $[h, f_0]=-2 f_0$.
Using the condition $\ad_K^3 f'=0$, we are able to determine that
$$f_0=y^2 (u_{xxx}+6 u u_x +3 \exnl u_{y}) +8 y \exnl u +2 D_x^{-1} u$$
and 
\begin{eqnarray*}
f'=f+ f_0=2 y \tau +x^2 u_x +2 xy u_y +4 x u +4 y \exnl u +2 D_x^{-1} u  .
\end{eqnarray*}
This leads to
\begin{eqnarray}\label{kpmas2}
\tau'=\frac{1}{6} [f',\  K]=\frac{2}{3} y P+x K+4 u_{xx}+8 u^2+2 u_x D_x^{-1}u +6 \exnl^2 u ,
\end{eqnarray}
where $P$ is defined by (\ref{kpsym2}). This lies in the non-isospectral KP hierarchy given in \cite{zhang13} and it is a master symmetry.
It is easy to see that
\begin{eqnarray*}
 [\tau,\ u_x]=u_y \qquad \mbox{and} \qquad [\tau',\ u_x]=K. 
\end{eqnarray*}
Starting from $u_x$,
the master symmetry $\tau'$ generates the hierarchy of symmetries with the reduction to the symmetry hierarchy of the KdV equation
if $u$ is independent of $y$. 

In the following Diagram \ref{dia3}, we demonstrate the relations between these two $\liesl(2)$ and the corresponding
master symmetries. 
\begin{figure}[hz]
\captionsetup{name=Diagram}
\centering
\begin{diagram}
& u_x       &        &           &          &            &             &        &           &\\
& \dTo^{\tau}   & \rdTo(2,4)~{f'} &           &          &            &             &        &           &\\
& u_y        &         &          &          &            &             &        &           &\\
& \dTo^{\tau}   & \rdTo_f &           &          &            &             &        &           &\\
 & K       &         &  h        &           &           &             &         &           &\\
&\dTo^{\tau} &\rdTo(2,4)~{f'}  \rdTo^f &   \dTo_{\tau}        &\rdTo(2,4)~{f'} \rdTo^f  &            &             &         &           &\\
& P        &         & \tau      &          & f          &             &         &            &\\
&\dTo^{\tau} & \rdTo_f &          & \rdTo_f  &            &             &         &           &\\
&         &         & \tau'      &          & f'         &             &         &            &
\end{diagram}
\caption{Master symmetries for KP equation and two $\liesl(2, \C)$}
\label{dia3}
\end{figure}
\subsubsection{The noncommutative KP equation}
Consider the noncommutative KP equation \cite{mr93c:58085,wang21}
\begin{eqnarray}\label{nkp}
u_{t}=K=u_{xxx} +3 u u_x +3 u_x u +3 \Theta u_y -3 C_u \Theta u ,
\end{eqnarray}
where the dependent variable $u$ takes its value in an associative algebra and
$C_u$ denotes the commutator in the associative algebra, that is, $C_u
\Theta u=u \Theta u-(\Theta u) u$, which is zero if $u$ takes its value in a
commutative algebra.  When $u$ is independent of $y$, it leads to the
noncommutative KdV equation ( see \cite{mr99c:58077, mr1781148} for more
examples and noncommutative $(1+1)$-dimensional
integrable evolution equations)
\begin{eqnarray*}
 u_{t}=K=u_{xxx} +3 u u_x +3 u_x u .
\end{eqnarray*}
The noncommutative KP equation (\ref{nkp}) is homogeneous with respect to the
same scaling $h=2 y u_y+x u_x+2 u$ for the KP equation (\ref{kpth}).

Notice also that the elements in $\liesl(2)$ presented in Lemma \ref{lem5} and \ref{lem6} are linear in $u$ and its derivatives, which implies that
they are valid no matter the dependent variables are commutative or noncommutative. Therefore, we can apply Theorem \ref{thm2} to equation (\ref{nkp}).
We define
\begin{eqnarray}
&&\tau=\frac{1}{6} [x y u_x+  y^2 u_y+ 2 y u,\  K]\nonumber\\
&&\quad=\frac{1}{2} y (u_{xxx}+3 u u_x +3 u_x u +3 \Theta u_y -3 C_u \Theta u)+x u_y +2 \exnl u-\frac{1}{2} C_u D_x^{-1} u.\label{taunkp}
\end{eqnarray}
Using it, we compute
\begin{eqnarray*}
&& [\tau, K]=6 (u_{xxy}+\Theta^2 u_y+2 u u_y +2 u_y u+u_x \Theta u
+(\Theta u) u_x)\\
&&\qquad\quad +6(C_u D_x^{-1} C_u \Theta u
-\Theta C_u \Theta u-C_u \Theta^2 u),
\end{eqnarray*}
which is a symmetry flow as presented in \cite{wang21} implying that $[[\tau, K], K]=0$. Condition (ii) in Theorem \ref{thm2} is trivially satisfied
due to Remark \ref{rem3}. Thus $\tau$ defined by (\ref{taunkp}) is a master symmetry
for equation (\ref{nkp}).

\subsubsection{A new integrable Davey-Stewartson type equation}
Recently, Huard and Novikov carried out the classification of integrable Davey-Stewartson type equations \cite{HNov13}. They found a few new equations.
Here we use our approach to construct a master symmetries for the one (equation (3.6) in \cite{HNov13}), whose linear terms have constant coefficients. 
We further give its Hamiltonian operator and 
compute its conserved densities using the master symmetry.

Consider the system of the following form
\begin{eqnarray}\label{ds}
\left\{ \begin{array}{l} u_t=u_x \exnl^{-1} u-\frac{1}{4} c^2 v v_y +v_x +\epsilon (u_{xx}+c v_{xy})
\\ v_t=D_x( v \exnl^{-1}u)+c v v_x-\epsilon v_{xx} \end{array} \right.
\end{eqnarray}
where $c$ and $\epsilon$ are constant. It is homogeneous since
we have
\begin{eqnarray*}
 \big{[}\left(\begin{array}{c}x u_x\\ x v_x+v \end{array}\right), \left(\begin{array}{c} u_t\\ v_t \end{array}\right)\big{]}=2 \left(\begin{array}{c} u_t\\ v_t \end{array}\right).
\end{eqnarray*}
To shorten the expression, we use
notations $p_y=u_x$ and $q_y=v_x$. According to Theorem \ref{thm2}, its master symmetry is
\begin{eqnarray*}
\tau= \frac{1}{4}\big{[}\left(\begin{array}{c}x^2 u_x\\ x^2 v_x+2 x v \end{array}\right), \left(\begin{array}{c} u_t\\ v_t \end{array}\right)\big{]}
 =\left(\begin{array}{c}  x u_t+\frac{\epsilon}{2} (u_x+c v_y)+\frac{v}{2} \\  x v_t+ v p +\frac{c}{2} v^2-\frac{3 }{2} \epsilon v_x \end{array}\right) .
\end{eqnarray*}
Notice that the anti-symmetric constant operator
\begin{eqnarray*}
\cal H= \left(\begin{array}{cc} -c D_y^2 D_x^{-1} & D_y \\ D_y & 0 \end{array}\right)
= \left(\begin{array}{cc} -c D_y \Theta & D_y \\ D_y & 0 \end{array}\right)
\end{eqnarray*}
is Hamiltonian, and it is a Hamiltonian operator for equation (\ref{ds}) since we can write it as
\begin{eqnarray}
&& \left(\begin{array}{c}u_t\\ v_t \end{array}\right)={\cal H} \left(\begin{array}{c}\exnl^{-1} (v p)+\frac{c}{2} \exnl^{-1}v^2- \epsilon \exnl^{-1} v_x \\ 
 \frac{1}{2}p^2+c v p+\frac{3}{8} c^2 v^2+ q+\epsilon p_x 
\end{array}\right)\nonumber\\
 &&\qquad={\cal H} \delta_{(u,v)} \left(\frac{1}{2} (v p^2+c v^2 p+v q)+\epsilon
v p_x +\frac{1}{8} c^2 v^3\right).\label{eqdsh}
\end{eqnarray}
We now use the master symmetry to construct the hierarchy of conserved densities of (\ref{ds}).
It is clear $u$ and $v$ are two conserved densities since we can write
\begin{eqnarray*}
&&u_t=D_y \left(\frac{1}{2} p^2-\frac{1}{8}c^2 v^2+q +\epsilon (p_x+c v_x) \right);\\
&&v_t=D_x\left( v p+\frac{c}{2} v^2-\epsilon v_{x}\right).
\end{eqnarray*}
However, we can't take the above two densities as starting points since
we have $\int u_*(\tau)=v$ and $\int v_*(\tau)=0$. Besides, they both generate
zero Hamiltonian vector field.  Notice that
\begin{eqnarray*}
&& \left(\begin{array}{c}u_x\\ v_x \end{array}\right)={\cal H} \left(\begin{array}{c}q \\ 
 p+c v \end{array}\right)={\cal H} \delta_{(u,v)} \left(v p +\frac{c}{2} v^2 \right), 
\end{eqnarray*}
and
\begin{eqnarray*}
&& \tau={\cal H} \delta_{(u,v)} \left(\frac{x}{2} (v p^2+c v^2 p+v
q)+\epsilon x v p_x +\frac{1}{8} c^2 x v^3+\frac{\epsilon}{2}u q\right). 
\end{eqnarray*}
Let $\rho_0=v p +\frac{c}{2} v^2 $. We apply Proposition \ref{pro3} to
construct Hamiltonians corresponding to symmetries, that is, $a_n=[\tau,
a_{n-1}]={\cal H} \delta_{(u,v)} \rho_n$ and $\rho_n=\int {\rho_{n-1}}_*(\tau)$.
In particular, we have 

\begin{eqnarray*}
&&\rho_1=\int {\rho_0}_* (\tau)=
\int  \left( (x v_t+ v p +\frac{c}{2} v^2-\frac{3 }{2} \epsilon v_x) p
+v \exnl^{-1}(x u_t+\frac{\epsilon}{2} (u_x+c v_y)+\frac{v}{2}) \right)\\
&&\quad +\int c v (x v_t+ v p +\frac{c}{2} v^2-\frac{3 }{2} \epsilon v_x)\\
&&\quad\equiv \frac{1}{2} \left(v p^2+c v^2 p+v q\right)+\epsilon v p_x
+\frac{1}{8} c^2 v^3,
\end{eqnarray*}
which is the Hamiltonian for this equation (see (\ref{eqdsh})),
and
\begin{eqnarray*}
&& \rho_2=\int {\rho_1}_* (\tau)\equiv \frac{1}{2}\left( v p^3+3 v p q-3 \epsilon v_x p^2
-4 \epsilon^2 p_x v_x\right)+\frac{3}{4} c v^2 \left(q+\epsilon p_x+p^2\right)
\\&&\qquad
- c \epsilon^2 v_x^2+\frac{3}{8} c^2 v^3 p +\frac{1}{16} c^3 v^4,
\end{eqnarray*}
which corresponds to the next symmetry
\begin{eqnarray*}
&&a_2={\cal H} \delta_{(u,v)} \rho_2\\
&&= \left(\begin{array}{l} \begin{array}{r}D_y \left(\frac{1}{2} p^3+\frac{3}{2}
p q +3 \epsilon p p_x+2 \epsilon^2 p_{xx}+\frac{3}{2} \epsilon c  (p_x v +2 p
v_x)-\frac{1}{8} c^2 v^2(3 p+c v)\right)+\\+D_y\left(
 \frac{3}{2} \exnl^{-1}(p v)+\frac{3}{4} c \exnl^{-1} v^2 \right)\end{array}\\ 
\frac{3}{2} D_x \left(p^2 v + v q-2 \epsilon p v_x +\frac{4}{3} \epsilon^2 v_{xx}+ c p v^2- \epsilon c v v_x+\frac{1}{4} c^2 v^3 \right)        
 \end{array} \right).
\end{eqnarray*}

\subsection{$(2+1)$-dimensional lattice-field  equations}
In this section, we apply the proposed approach in section \ref{sec4} to $(2+1)$-dimensional lattice-field  equations.
Consider the differential-difference KP equation of form \cite{djm82}
\begin{eqnarray}\label{ddkp}
u_{t}=u_{yy} +2 u u_y+2 (\cS-1)^{-1} u_{yy},
\end{eqnarray}
where the dependent variable $u=u(n,y,t)$ is a function of continuous variables $y$, $t$ and 
discrete variable $n\in \Z$, and it is smooth with respect to the continuous variables.
Here $\cS$ is the shift operator mapping $u(n,y,t)$ to $u(n+1,y,t)$. The operator $\cS-1$ is a discrete analogue
of derivative. Notice that $D_y$ and $(\cS-1)^{-1}$ appear in pair. In the same way as in the case for $(2+1)$-dimensional partial differential equations described
in Section \ref{seckp}, we also introduce the concept of quasi-local functions.

We denote  $\cS^i \partial_y^j u$ by $u_{i,jy}$, where $i\in \Z$ and $j\in \N$
is the order of derivative of the dependent variable $u$ with respect to its
independent variable $y$. When $i=0$ or $j=0$, we simply write as $u_i$ or $u_{jy}$. All functions depending on $n, y, t, u$ and $u_{i,jy}$ and
being smooth with respect to its variables except $n$ form a differential ring $\A$ with total $y$-derivation $D_y$.
Let us denote
\begin{eqnarray}\label{Thetas}
\Theta=(\cS-1)^{-1} D_y, \qquad \Theta^{-1}=D_y^{-1} (\cS-1).
\end{eqnarray}
By considering a sequence of extensions of $\A$ as in Section \ref{seckp}, we can define the quasi-local functions $\A(\Theta)$.

Since $y$ is the only continuous spatial variable for dependent variable, we consider the scaling
symmetry as $ y u_y +\alpha u$, where $\alpha$ is constant. 
Thus, around $h$ we get the same $\liesl(2)$ representations (only changing $x$ to $y$) as presented 
in Lemma \ref{lem1} and Lemma \ref{lem2} for the $(1+1)$-dimensional case.
We are then ready for applying Theorem \ref{thm2} to these type of evolutionary homogeneous equations.

\subsubsection{The differential-difference KP equation}
Using our notation, we rewrite equation (\ref{ddkp}) as
\begin{eqnarray}\label{ddkpth}
u_{t}=K=u_{yy} +2 u u_y+2 \Theta u_{y} .
\end{eqnarray}
Its master symmetry has recently been studied in \cite{zhang13, farbod14}. In the latter paper \cite{farbod14}, the author also studied its time dependent 
symmetries using the $\liesl_h(2)$ (cf. Diagram \ref{dia2}) representation. Here
we demonstrate our approach on how to construct the master symmetry.
Notice that equation (\ref{ddkpth}) is homogeneous with respect to $h=y
u_y+u$ and
\begin{eqnarray*}
[y u_y +u, K]=2 K.
\end{eqnarray*}
Following Remark \ref{rem1} we take $f=-(y^2 u_y+2 y u +f_0)$ forming
$\liesl(2)$ with $u_y$ and $h$, where $f_0$ satisfies $[e, f_0]=0$ and $[h,
f_0]=-2 f_0$. Assume $f_0$ is a function of independent variable $n$,
i.e., $f_0=g(n)$ and 
define
\begin{eqnarray*}
\tau=\frac{1}{4}[y^2 u_y+2 y u+g(n), K]=y(u_{yy}+2 u u_y+2 \Theta u_y)
+\frac{1}{2} g(n)
u_y+\frac{3}{2} u_y+u^2 +3 \Theta u.
\end{eqnarray*}
Using the condition $[[\tau, K], K]=0$, we can determine $g(n)=2n$. The
resulting $\tau$ is equivalent to the master symmetry presented in
\cite{zhang13, farbod14} in the sense that it generates the same space of
symmetries for equation 
(\ref{ddkpth}). 

\subsubsection{The (2+1)-dimensional Volterra lattice}\label{sec531}
Consider the following differential-difference equation
\begin{eqnarray}
 &&u_t=K=(\cS+1)(\cS-1)^{-1}u_{yy}+u_y (\cS+1)(\cS-1)^{-1}u_y+\exp(2u_{1})-\exp(2 u_{-1})\nonumber\\
 &&\ \quad \ \qquad=(\cS+1)\exnl u_y+u_y (\cS+1)\exnl u +\exp(2u_{1})-\exp(2 u_{-1}) .\label{vol1} 
\end{eqnarray}
This system of periodic form is called two dimensional generalisation
of the Volterra lattice \cite{mik79}. Its Lax representation is invariant under dihedral reduction groups \cite{mik79,mik80,mik81},
which is generated by both inner and outer automorphisms
of $\liesl(n,\C)$, and it can be viewed as a discretisation of
the Kadomtsev-Petviashvili equation \cite{LM04}. For fixed period $n$, it is a bi-Hamiltonian system.
When $n=3$, its recursion operator and bi-Hamiltonian structure are explicitly constructed from its Lax representation in 
\cite{wang09}. Its Darboux transformation for arbitrary period $n$ has recently
been constructed in \cite{MPW}.

If dependent variable $u$ is independent of $y$, equation (\ref{vol1}) reduces
into 
$$u_t=\exp(2u_{1})-\exp(2 u_{-1}),$$
which is the well-known Volterra chain $v_{t'}=v (v_1-v_{-1})$ under the point
transformation $v=\exp(2u)$ and $t'=2t$.

In this section, we construct a master symmetry for (\ref{vol1}) using our approach in Section \ref{sec4} and further construct its conserved
densities via master symmetries.

Notice that equation (\ref{vol1}) is homogeneous with respect to scaling $y
u_y+1$. Indeed, by direct computation, we have $[y u_y+1, K]=2 K$ for equation
(\ref{vol1}).  The following three elements form an $\liesl(2)$:
\begin{eqnarray*}
 e=u_y,\quad h=2(y u_y+1)\quad  \mbox{and}\quad f=-(y^2 u_y+2y).
\end{eqnarray*}
It follows from Theorem \ref{thm2} that
\begin{eqnarray}\label{msvol1}
 \tau=\frac{1}{4}[y^2 u_y+2y, K]=y K+n u_y+ 2 \exnl u 
\end{eqnarray}
is a master symmetry for equation (\ref{vol1}) if it satisfies the conditions. First of all, Condition (ii) of Theorem
\ref{thm2} is trivially satisfied according to  Remark \ref{rem3}. We now check Condition (i) $[[\tau, K], K]=0$.
By direct calculation, we have
\begin{eqnarray*}
&&a_2=[\tau, K]= (\cS^2+\cS+1) \exnl^2 u_y+3 u_y \cS^2 \exnl^2 u+3 (\exnl u) (\cS+1)\exnl u_y+u_y^3+3 u_y (\exnl u) \cS (\exnl u)\\
&&\qquad+\frac{3}{2} D_y( \exp(2u_1) +\exp(2u)+\exp(2u_{-1}))+3 \exp(2u_1) u_y\\
&&\qquad+3 \left(\exp(2u_1)-\exp(2 u_{-1})\right) \exnl u,
\end{eqnarray*}
where we used the identity
$$ (\cS-1)^{-1} (n u_y)=(n-1) (\cS-1)^{-1} u_y-(\cS-1)^{-2} u_y
$$
and further $[a_2, K]=0$. Here we skip the long formulas to check $a_2$ is a symmetry of (\ref{vol1}),
which can be carried out by organising the terms according to polynomial terms, exponential terms and mixed terms.
We can also compare $a_2$ to the symmetry flows obtained via its Lax representation, which is given in \cite{mik79, LM04, FNR13}.

We now look at conserved densities for equation (\ref{vol1}). Equation (\ref{vol1}) can be written as
\begin{eqnarray*}\label{vol}
 &&u_t=K=(\cS-1) \left( (\cS+1) \exnl^2 u +(\exnl u)^2+ \exp(2u)+\exp(2 u_{-1}) \right)\nonumber,
\end{eqnarray*}
which implies that $u$ is a conserved density. Since $\int u_*(\tau)=0$, we can not use it as a start point to generate other conserved densities.

Let $\rho_0=\frac{1}{2} (\exnl u)^2+\exp(2u)$, which is a conserved density for (\ref{vol1}). Indeed, we have
\begin{eqnarray*}
&& \frac{\partial \rho_0}{\partial t}=(\exnl u) \exnl u_t +2 \exp(2 u) u_t\\
&&\quad =D_y\left( (\exnl u) \left( (\cS+1) \exnl^2 u +\frac{2}{3} (\exnl u)^2+ 3 \exp(2u)+\exp(2 u_{-1})  \right)+\exp(2u) u_y \right) \\
&&\qquad+(\cS-1)\left( \exp(2u_{-1}) \exnl u_y + 2 \exp(2u+2u_{-1})-(\exnl^2 u )^2\right).
\end{eqnarray*}
For fixed period $n=3$, it reduces to the conserved density appeared in \cite{wang09}.
Starting from it, using the master symmetry given by (\ref{msvol1}) we get
the next conserved density:
\begin{eqnarray*}
&&\rho_1=\int {\rho_0}_*(\tau)=\int \tau \cdot \frac{\delta \rho_0}{\delta u}\equiv \frac{1}{3}(\exnl u)^3+ (\exnl u) (\exnl^2 u) +2 (\exnl u ) \exp(2u).
\end{eqnarray*}

\subsubsection{Another (2+1)-dimensional generalised Volterra
Chain}\label{sec532}
In this section, we construct a master symmetry for the following equation
\begin{eqnarray}
&& u_t=(\cS+1)(\cS-1)^{-1}u_{yy}+u_y (\cS+1)(\cS-1)^{-1}u_y+2\exp(u+u_{1})-2\exp(u+ u_{-1})\nonumber\\
 &&\ \quad=K=(\cS+1)\exnl u_y+u_y (\cS+1)\exnl u +2\exp(u+u_{1})-2\exp(u+ u_{-1}).\label{vol2}
\end{eqnarray}
This equation is appeared in \cite{FNR13} when the authors classified a family of equations with the non-locality of 
intermediate long wave type. In fact, both equations (\ref{vol1}) and (\ref{vol2}) are listed in 
the list of integrable equations of this class, and these two equations have coinciding dispersionless limits.

We can also consider the equation of periodic form. The corresponding equations of (\ref{vol1}) and (\ref{vol2}) for period $n=3$ 
appeared in the classification of integrable systems of nonlinear
Schr{\"o}dinger type \cite{mr89g:58092}.

If dependent variable $u$ is independent of $y$, equation (\ref{vol2})
reduces into 
$$u_t=2\exp(u+u_{1})-2\exp(u+ u_{-1}),$$
which is also  the well-known Volterra chain $v_{t'}=v (v_1-v_{-1})$ under the
point
transformation $v=\exp(u+u_1)$ and $t'=2t$.

Notice that equation (\ref{vol2}) is also homogeneous and shares the same scaling as equation (\ref{vol1}). Thus we can use the same $\liesl(2)$ to construct
its master symmetry. Here we only write out the results since the arguments of applying Theorem \ref{thm2} are the same due to equations (\ref{vol1}) and
(\ref{vol2}) having the same linear terms. We only need to check Condition (i) $[[\tau, K], K]=0$.

We define
\begin{eqnarray}\label{msvol2}
 \tau=\frac{1}{4}[y^2 u_y+2y, K]=y K+n u_y+ 2 \exnl u ,
\end{eqnarray}
which is of the same form as (\ref{msvol1}),
and  we have
\begin{eqnarray*}
&&a_2=[\tau, K]= (\cS^2+\cS+1) \exnl^2 u_y+3 u_y \cS^2 \exnl^2 u+3 (\exnl u) (\cS+1)\exnl u_y+u_y^3+3 u_y (\exnl u) \cS (\exnl u)\\
&&\qquad +\exp( u +u_1) (3 u_{1,y} +6 u_y+6 \exnl u)+ \exp( u +u_{-1}) (3 u_{-1,y}-6 \exnl u) .
\end{eqnarray*}
To verify $[a_2, K]=0$, we can either compute directly or compare it to the symmetry flows obtained via its Lax representation \cite{FNR13}.

We now look at conserved densities for equation (\ref{vol2}) and notice that 
$$\rho_0=\frac{1}{2} (\exnl u)^2+\exp(u+u_{-1})$$
is a conserved density. Indeed, we have
\begin{eqnarray*}
&& \frac{\partial \rho_1}{\partial t}=(\exnl u) \exnl u_t +\exp( u+u_{-1}) (u_t+u_{-1,t})\\
&&\quad \equiv D_y\left( -\frac{1}{3} (\exnl u)^3+2 \exp(u+ u_{1}) \exnl u +\exp(u+u_{1}) u_y +\exp(u+u_{-1}) u_y  \right) \\
&&\qquad+(\cS-1)\left(\exp(u+u_{-1}) u_y  u_{-1,y} -2\exp(u+u_{-1}) u_y \exnl u-(\exnl^2 u )^2\right) .
\end{eqnarray*}
Starting from it, using the master symmetry given by (\ref{msvol2}) we get
its next conserved density:
\begin{eqnarray*}
&&\rho_1=\int {\rho_0}_*(\tau)=\int \tau \cdot \frac{\delta \rho_0}{\delta u}\equiv 
\frac{1}{3} (\exnl u)^3+ (\exnl u) (\exnl^2 u) +(\exp(u+u_1)+\exp(u+u_{-1})) \exnl u.
\end{eqnarray*}

\section{Discussion}
In this paper, we present a new structure called the $\cO$-scheme for
homogeneous evolutionary integrable equations. For an evolutionary vector field
$u_t=K[u]$ the Lie algebra of symmetries is the
kernel space of $ad_K$, the Lie algebra of master symmetries
is in the kernel of $ad_K^2$. Symmetries, master symmetries and higher order
nilpotent elements can neatly
be fit as elements of modules in the BGG category $\cO$.
This is based on the observation that there is a free
 $\liesl(2)$ representation and further an infinite dimensional module in the
BBG category related to such equations, whether integrable or not.
We prove that under technical conditions, it is enough for us to construct
master symmetries and organise time dependent symmetries
using the elements in this module. It also offers us an approach to construct
master symmetries.

The master symmetries for two $(2+1)$-dimensional Volterra Chains
are of the same form, see (\ref{msvol1}) and (\ref{msvol2}), since they have the
same scaling symmetry. Thus given a family of equations $u_t=K$ sharing the same
scaling symmetry, condition that $\ad_K^2 f$ commuting 
$K$ in Theorem \ref{thm2} can be used as a criterion for classifying integrable
equations.   It is worth to explore although
the calculation involved is tedious, in particular, for some
$(2+1)$-dimensional equations whose integrability condition is challenging to
formulate.

In this paper, we presented the $\cO$-schemes for both $(1+1)$- and
$(2+1)$-dimensional partial differential equations and $(2+1)$-dimensional
lattice-field equations with $(\cS-1)^{-1} D_y$  as nonlocal terms. Currently,
we are working on extension of such $\cO$-scheme to integrable
differential-difference and discrete systems.

The $\cO$-scheme in this paper is formulated in the case of
 $\liesl(2)$ modules. The extension of this
construction to algebras of higher rank is a promising direction of research
which would enable us to study much wider class of systems important in
applications including the Boussinesq equation, the
resonant wave interaction system, two dimensional Toda lattice and many others.

\section*{Acknowledgement}
The author would like to thank A.V. Mikhailov, J.A. Sanders and V.S. Novikov for
useful discussions,
and gratefully acknowledges financial support through EPSRC grant EP/I038659/1.

\end{document}